\documentclass[11pt,a4paper,reqno]{amsart}  
\usepackage{mathtools, amsmath, amsthm, amssymb, xcolor, hyperref, graphicx,multirow,array}
\newtheorem{lemma}{Lemma}

\usepackage{chngcntr}
\makeatletter
\def\hlinew#1{
	\noalign{\ifnum0=`}\fi\hrule \@height #1 \futurelet
	\reserved@a\@xhline}
\makeatother

\renewcommand{\arraystretch}{1.27}

\allowdisplaybreaks[1]

\theoremstyle{plain}
\newtheorem{thm}{Theorem} 

\newtheorem{cor}{Corollary}
\newtheorem{prop}{Proposition}
\newtheorem{conj}{Conjecture}
\theoremstyle{definition}
\newtheorem{remark}{Remark}

\newtheorem*{thm*}{Theorem}

\usepackage{graphicx,color}

\newcommand{\ZZ}{\mathbb{Z}}

\newcommand{\Mgnbar}{\overline{\mathcal{M}}_{g, n}}

\newcommand{\Mgtwobar}{\overline{\mathcal{M}}_{g, 2}}
\newcommand{\p}{\partial}
\newcommand{\e}{\epsilon}
\newcommand{\X}{\mathcal{X}}

\def\beq{\begin{equation}}
	\def\eeq{\end{equation}}
\def\be{\begin{equation}}
	\def\ee{\end{equation}}
\def\bes{\begin{equation*}}
	\def\ees{\end{equation*}}
\def\thin{\hspace 0.5pt}
\newcommand{\nn}{\nonumber}
\def\={\;=\;}   \def\+{\,+\,} \def\m{\,-\,}  \def\:{\;:=\;}
    \def\a{\alpha} \def\dd{\mathbf d}   \def\g{\gamma}

\def\thin{\hskip 1 pt}

\begin{document}
\title{On uniform large genus asymptotics of Witten's intersection numbers}
\author{Jindong Guo, Di Yang, Don Zagier}
\address{Jindong Guo, School of Mathematical Sciences, University of Science and Technology of China, 230026 Hefei, P.R.~China}
\email{guojindong@mail.ustc.edu.cn}
\address{Di Yang, School of Mathematical Sciences, University of Science and Technology of China, 230026 Hefei, P.R.~China}
\email{diyang@ustc.edu.cn}
\address{Don Zagier, Max Planck Institute for Mathematics, 53111 Bonn, 
	Germany, and International Centre for Theoretical Physics, Trieste, Italy}
\email{dbz@mpim-bonn.mpg.de}
\begin{abstract}
Following ideas from~\cite{GNYZ}, we give a uniform large genus 
asymptotics for primitive psi-class intersection numbers on the moduli space of stable algebraic curves, 
and extend this result including insertions of zeros in a certain uniform way. 
Application to a particular formal solution of the Painlev\'e I equation is given. 
We also use a method from~\cite{GNYZ} to give a new proof of the polynomiality conjecture on large genus asymptotic expansions 
of psi-class intersection numbers.
\end{abstract}
\maketitle
\tableofcontents
\section{Introduction and statements of the main results}
Let $\overline{\mathcal{M}}_{g,n}$ be the Deligne--Mumford moduli space 
of stable algebraic curves of genus $g$ with $n$ distinct marked points.
In~\cite{Witten} Witten proposed a striking conjecture 
on the relationship between topology of $\overline{\mathcal{M}}_{g,n}$ and the celebrated Korteweg--de Vries (KdV) 
integrable hierarchy. To be precise, let  
\beq\label{ZWK}
Z({\bf t};\e)\:\exp \biggl(\,\sum_{g,n} \frac{\e^{2g-2}}{n!} 
\sum_{d_1,\dots,d_n} t_{d_1} \cdots t_{d_n} \int_{\Mgnbar} \psi_1^{d_1} \cdots \psi_n^{d_n} \biggr)\,,
\eeq
be the {\it partition function of psi-class intersection numbers}, 
where ${\bf t}=(t_0,t_1,t_2,\dots)$ is an infinite vector of indeterminates, $\e$ is an indeterminate often called the string coupling constant, and 
$\psi_j$ denotes the first Chern class of the $j$th tautological line bundle on $\overline{\mathcal{M}}_{g,n}$.
Here and below, $\sum_i:=\sum_{i\ge0}$.
Define
\beq\label{defuZ}
u\=u({\bf t};\e)\:\e^2\, \frac{\p^2 \log Z({\bf t};\e)}{\p t_0^2}\,.
\eeq
Then Witten's conjecture claims that $u$ satisfies the KdV hierarchy:
\beq\label{KdV-hie}
\frac{\p u}{\p t_k} \= \frac{1}{(2k+1)!!}\, \Bigl[\bigl(L^{\frac{2k+1}2}\bigr)_+\,,\, L\Bigr]\,, \quad k\ge0\,,
\eeq
where $L:=\e^2 \p_{t_0}^2 + 2u$ is the Lax operator (cf.~e.g.~\cite{Dickey}). 
The $k=1$ equation in~\eqref{KdV-hie} reads 
\beq
\frac{\p u}{\p t_1} \= u \frac{\p u}{\p t_0} + \frac{\e^2}{12} \frac{\p^3 u}{\p t_0^3}\,,
\eeq
which is the celebrated KdV equation. Another way of stating Witten's conjecture is that $Z({\bf t};\e)$ is a tau-function 
for the KdV hierarchy; see e.g.~\cite{BDY,DJKM,Dickey,DYZ,DZ} for the notion of a KdV tau-function.

We will refer to the psi-class intersection numbers 
\begin{align}\label{WKnumbers}
\int_{\overline{\mathcal{M}}_{g,n}} \psi_1^{d_1} \cdots \psi_n^{d_n}\,,\qquad g,n,d_1,\dots,d_n\ge0\,,
\end{align}
as {\it Witten's intersection numbers}. 
They play important roles in studying 
 Weil--Petersson volumes of $\overline{\mathcal{M}}_{g,n}$~\cite{KMZ, LX09, ManZ, MZ, MS} and 
Masur--Veech volumes of moduli space of quadratic 
differentials~\cite{Agg, DGZZ22}. They are called {\it primitive} if $d_1\ge2$, $\dots$, $d_n\ge2$.
By the degree-dimension matching we know that the numbers~\eqref{WKnumbers} vanish unless 
\begin{align}\label{ddc}
d_1+\cdots+d_n\=3g-3+n\,.
\end{align} 

Witten's conjecture was first proved by Kontsevich~\cite{Kontsevich}. This conjecture also 
can~\cite{DVV} be equivalently stated that 
the partition function $Z=Z(\mathbf{t};\epsilon)$
satisfies the Virasoro constraints: 
\begin{align}\label{Virasoroconstra}
L_m (Z) \= 0\,, \quad \forall\, m\geq -1\,,
\end{align}
where $L_m$ are linear operators defined by
\begin{align}
&L_m \= -\frac{(2m+3)!!}{2^{m+1}} \frac{\p}{\p t_{m+1}} + \sum_{d} \frac{(2d+2m+1)!!}{(2d-1)!! \, 2^{m+1}} 
\,t_d\, \frac{\p }{\p t_{d+m}} \nn\\
& + \frac{\e^2} 2 \sum_{d=0}^{m-1} \frac{(2d+1)!!(2m-2d-1)!!}{2^{m+1}}\frac{\p^2}{\p t_d \p t_{m-1-d}} 
+ \frac{t_0^2}{2\e^2}\,\delta_{m,-1} + \frac1{16}\, \delta_{m,0}\,, \label{vira}
\end{align}
which satisfy the Virasoro commutation relations:
\beq
[L_{m_1},L_{m_2}]\=(m_1-m_2) \, L_{m_1+m_2}\,, \quad m_1,m_2\ge-1\,.
\eeq
It was shown in~\cite{DVV} that 
the Virasoro constraints~\eqref{Virasoroconstra} are equivalent to the following recursive relation:
\begin{align}
&U(\dd)\=\sum_{j=2}^n(2d_j+1) \, U(d_2,\dots,d_j+d_1-1,\dots,d_n)\nn\\
& +\frac12\sum_{\substack{a, \thin b \\ a+b=d_1-2}} \biggl( U(a,b,d_2,\dots,d_n) +\sum_{I\sqcup J=\{2,\dots,n\}}
U(a,\dd_I) \,
U(b,\dd_J) \biggr)\,, \label{DVV}
\end{align}
where $\dd:=(d_1,\dots,d_n)\in(\ZZ_{\ge0})^n$ satisfying $g(\dd):= 1+\frac{1}{3}\sum_{j=1}^{n}(d_j-1)\in\ZZ_{\ge0}$ and
 $2g(\dd)-2+n\ge2$, and $U(\dd)$ is defined by 
\begin{align}
U(\dd)\:\prod_{j=1}^{n}(2d_j+1)!!\,\int_{\overline{\mathcal{M}}_{g(\dd),n}} \psi_1^{d_1} \cdots \psi_n^{d_{n}}\,. \label{defUnormal}
\end{align}
Note that the integral on the right-hand side of~\eqref{defUnormal} is understood as~0 if $g(\dd)$ is not a non-negative integer.
  The relation~\eqref{DVV} is now widely known as 
 the {\it Dijkgraaf--Verlinde--Verlinde (DVV) relation}.

In~\cite{BDY} 
Bertola, Dubrovin and the second author of the present paper 
derived the following explicit formula of generating series of $n$-point Witten's intersection numbers:
\begin{align} 
\sum_{d_1,\dots,d_n}\frac{U(\dd)}{\prod_{i=1}^n\lambda_i^{d_i+1}} =
\left\{\begin{array}{ll}
\sum_{g\geq1}\frac{(6g-3)!!}{24^g \, g! \, \lambda^{3g-1}} \,, & n=1\,,\\
& \\
- \frac1{n} 
\sum_{\sigma\in S_n}\frac{{\rm tr} \, \prod_{i=1}^n M(\lambda_{\sigma(i)})}{\prod_{i=1}^{n}(\lambda_{\sigma(i)}-\lambda_{\sigma(i+1)})} - \frac{\delta_{n,2}(\lambda_1+\lambda_2)}{(\lambda_1-\lambda_2)^2}\,,  & n\ge2\,,
	\end{array}\right. \label{fnmr}
\end{align}
where 
$S_n$ denotes the symmetric group, for an element $\sigma\in S_n$ the notation $\sigma(n+1)$ is understood as $\sigma(1)$, and 
\begin{equation}\label{defM1002}
	M(\lambda) \: 
	\frac{1}{2}
	\begin{pmatrix}
		-\sum_{g=1}^{\infty}\frac{(6g-5)!!}{24^{g-1}(g-1)!}\lambda^{-3g+2} & -2\sum_{g=0}^{\infty}\frac{(6g-1)!!}{24^{g}g!}\lambda^{-3g} \\ 2\sum_{g=0}^{\infty}\frac{6g+1}{6g-1}\frac{(6g-1)!!}{24^{g}g!}\lambda^{-3g+1} & \sum_{g=1}^{\infty}\frac{(6g-5)!!}{24^{g-1}(g-1)!}\lambda^{-3g+2}
	\end{pmatrix}\,.
\end{equation}
The $n=1$ case of~\eqref{fnmr} is well known (see e.g.~\cite{Witten}). Following~\cite{GNYZ}, an explicit formula for 
Witten's intersection numbers will be deduced from~\eqref{fnmr} in Section~\ref{secgeneral}.
	
In~\cite{LX} Liu--Xu obtained 
the large genus asymptotics for Witten's intersection numbers with fixed~$n$ 
and fixed $d_1,\dots,d_{n-1}$; the requirement that $n$ should be fixed is essential in~\cite{LX}, meaning that 
it cannot be improved if one uses the normalizations considered in~\cite{LX}.
In~\cite{DGZZ20,DGZZ22} 
Delecroix--Goujard--Zograf--Zorich (DGZZ) introduced a remarkable normalization of Witten's intersection numbers
\beq\label{defGd}
G(\dd) \: \frac{24^{g(\dd)}\,g(\dd)!\,\prod_{j=1}^{n}(2d_j+1)!!}{(6g(\dd)+2n-5)!!} \, 
\int_{\overline{\mathcal{M}}_{g(\dd), \thin n}} \psi_1^{d_1} \cdots \psi_n^{d_n}\,, \quad \dd\in(\mathbb{Z}_{\ge0})^n\,,
\eeq
and proposed the following important conjecture: for any $C>0$,
\begin{equation}\label{DGZZconj}
\lim\limits_{g\to+\infty}\,\max_{1\le n \le C\log g}\, \max_{\substack{\dd\in(\ZZ_{\ge0})^n \\ |\dd|=3g-3+n}} 
	\bigl|G(\dd)-1\bigr|=0\,,
\end{equation}
or stated differently, the normalized intersection numbers $G(\dd)$ tend to~1 uniformly if $n=O(\log g)$. 
Here $|\dd|:=d_1+\dots+d_n$.

The DGZZ conjecture was first proved by Aggarwal~\cite{Agg}, and later independently by two of the authors 
of the present paper~\cite{GY} by a different method. Actually, Aggarwal~\cite[Theorem~1.5]{Agg} proved 
the following stronger version of the DGZZ conjecture:
\beq\label{asyAgg}
\lim_{\epsilon\to0}\biggl(\lim_{g\to\infty} \max_{n<\epsilon \sqrt{g}}
\max_{\substack{\dd\in(\ZZ_{\ge0})^n\\ |\dd|=3g-3+n}} \Bigl|G(\dd)-1\Bigr|\biggr) \= 0\,.
\eeq
In other words, $G(\dd)$ tends to~1 uniformly for $n=o(\sqrt{g})$. According to~\cite{Agg,DGZZ20},  
$o(\sqrt{g})$ cannot be replaced by $O(\sqrt{g})$ in this statement.

Recently, Norbury and the three authors of the present paper~\cite{GNYZ} discovered and 
proved a completely uniform large genus asymptotic formula for 
the Br\'ezin--Gross--Witten (BGW) numbers (for the meaning of BGW numbers see~e.g.~\cite{GNYZ}). 
In this paper, we aim to find and prove the corresponding result for Witten's intersection numbers.

Following the idea of~\cite{GNYZ}, introduce a new normalization by
\begin{align}\label{defCg}
C(\dd)\:\frac{2^{2g(\dd)}\, \prod_{j=1}^n (2d_j+1)!!}{3^{2 g(\dd)-2+n}\,(2g(\dd)-3+n)!} \, \int_{\overline{\mathcal{M}}_{g(\dd),\thin n}} \psi_1^{d_1} \cdots \psi_n^{d_n}\,,\quad \dd\in(\mathbb{Z}_{\ge0})^n\,.
\end{align}
The numbers $G(\dd)$ and $C(\dd)$ are related by
\beq\label{CGrelation}
 C(0^{n-1},3g(\dd)-3+n) \, G(\dd) \= C(\dd)\,.
\eeq
From the $n=1$ case of~\eqref{fnmr} we know that $C(3g-2)$ has the explicit expression:
\begin{align}\label{onepoint}
C(3g-2)\=\frac{3\, (6g-3)!!}{54^g\, g!\, (2g-2)!}\,,\quad g\ge1\,.
\end{align}
Explicit values of the numbers~$C(\dd)$ with $g(\dd)=2,3$ are given in Table~\ref{tablenormalizednumbers}.
\begin{table}[phbt]\label{tablenormalizednumbers}
\renewcommand\arraystretch{1.2}
	\begin{center}
		\begin{tabular}{!{\vrule width 1.5pt}l|c|l|r!{\vrule width 1.5pt}}
			\hlinew{1.5pt}
			\multicolumn{4}{!{\vrule width 1.5pt}c!{\vrule width 1.5pt}}{$g=2,\quad D=3888$}\\
			\hlinew{1.5pt}
			$(4)$&$\frac{35}{144}$&0.24306&945\\
			\hline
			$(2,3)$&$\frac{1015}{3888}$&0.26106&1015\\
			\hline
			$(2,2,2)$&$\frac{175}{648}$&0.27006&1050\\
			\hlinew{1.5pt}
			\multicolumn{4}{!{\vrule width 1.5pt}c!{\vrule width 1.5pt}}{$g=3,\quad D=7558272$}\\
			\hlinew{1.5pt}
			$(7)$&$\frac{25025}{93312}$&0.26819&2027025\\
			\hline
			$(2,6)$&$\frac{77077}{279936}$&0.27534&2081079\\
			$(3,5)$&$\frac{38731}{139968}$&0.27671&2091474\\
			$(4,4)$&$\frac{4249}{15552}$&0.27321&2065014\\
			\hline
			$(2,2,5)$&$\frac{6545}{23328}$&0.28056&2120580\\
			$(2,3,4)$&$\frac{39235}{139968}$&0.28031&2118690\\
			$(3,3,3)$&$\frac{714175}{2519424}$&0.28347&2142525\\
			\hline
			$(2,2,2,4)$&$\frac{6625}{23328}$&0.28399&2146500\\
			$(2,2,3,3)$&$\frac{179375}{629856}$&0.28479&2152500\\
			\hline
			$(2,2,2,2,3)$&$\frac{120625}{419904}$&0.28727&2171250\\
			\hline
			$(2,2,2,2,2,2)$&$\frac{546875}{1889568}$&0.28942&2187500\\\hlinew{1.5pt}
		\end{tabular}\vskip 3pt
		\caption{Some normalized Witten's intersection numbers $C(\dd)$}
	\end{center}
\end{table}

Similar to the conjectural nesting property for BGW numbers~\cite[Conjecture~1]{GNYZ}, we observe
from Table~\ref{tablenormalizednumbers} and further numerical data that the primitive normalized Witten's intersection numbers $C(\dd)$
with a fixed genus $g\leq13$ lie between the values $C(3g-2)$ and $C(2^{3g-3})$.
For example, for $g=13$ the numbers $C(\dd)$ with $d_1,\dots,d_n\ge2$ lie between 
$C(37) =  0.30674776\cdots$
and
$C(2^{36}) = 0.31263595\cdots$. 
We formulate the following
\begin{conj}\label{conjnesting}
	For $g\ge2$, $n\ge1$ and for $d_1,\dots,d_n\ge1$ satisfying $|\dd|=3g-3+n$, 
	\begin{align}\label{ordering}
		C(3g-2)\;\leq\; C(\dd)\;\leq\, C(2^{3g-3})\,.
	\end{align}
\end{conj}

\begin{remark}
	The inequality~\eqref{ordering} does not hold if $d_j$ is allowed to be~0. E.g., for $g=2$,
	\begin{align}
		&C(0^6,10)\=\frac{1616615}{6718464}\;\approx\; 0.24062 \;<\;0.24306\;\approx\; \frac{35}{144}\=C(4)\,, \\
		&C(0^2,6)\=\frac{5005}{15552}\;\approx\; 0.32182 \;>\;
		0.27006\;\approx\;\frac{175}{648}\=C(2^3)\,.
	\end{align}
\end{remark}

\begin{remark}
A main conjecture for BGW numbers in~\cite{GNYZ} was the monotonicity of these numbers for fixed $g$ in a suitable lexicographic ordering. However, this does not hold for the normalized numbers $C(\dd)$, e.g.,~$C(4,4)<C(3,5)$. 
\end{remark}

\begin{remark}
There is also a weak monotonicity conjecture~\cite{GNYZ} for BGW numbers, 
saying that the interval $I_{g,n}$ 
(the convex hull of normalized BGW numbers
of genus~$g$ and length~$n$) lies strictly to the left of~$I_{g,n+1}$. 
Even this weak monotonicity fails for normalized primitive Witten's intersection numbers $C(\dd)$, 
a counterexample being
\begin{align}
	C(2^5,8)=\frac{727759375}{2448880128}\approx 0.29718< 0.29746\approx \frac{419588015525}{1410554953728}=C(3^4,5).
\end{align}
\end{remark}

Following~\cite{GNYZ}, we also observe from Table~\ref{tablenormalizednumbers} and further numerical data that the normalized primitive Witten's intersection numbers $C(\dd)$ for each fixed~$g$ are close to each other, e.g. the minimum and maximum
values $C(37)$ and $C(2^{36})$ for $g=13$ differ by less than $2$~percent. In view of the nesting property, let us focus on the two values $C(3g-2)$ and $C(2^{3g-3})$. 
On one hand, the value of $C(3g-2)$ is given by~\eqref{onepoint},
which by Stirling's formula satisfies
\begin{align}\label{asymsmallest}
	C(3g-2)\;\sim\; \frac{1}{\pi}\Bigl(1-\frac{17}{36g}+\frac{1}{2592g^2}-\frac{557}{279936g^3}+\cdots\Bigr)\,, \quad g\to\infty \,.
\end{align}
On the other hand,
with the help of a deep result \cite{JoshiK, Kapaev} on a certain formal solution to the Painlev\'e I 
equation (cf. Section~\ref{secproofofthm1} for detail), one can obtain that
\begin{align}\label{asymlargest}
	C(2^{3g-3})\;\sim\; \frac{1}{\pi}\,\Bigl(1-\frac{2}{9\,g}-\frac{238}{2025\,g^2}-\frac{198149}{2733750\,g^3}+\cdots\Bigr) \,,\quad g\to\infty\,.
\end{align}
Conjecture~\ref{conjnesting} together with formulas~\eqref{asymsmallest},
\eqref{asymlargest} and \eqref{dilaton} implies that for $\dd\in(\mathbb{Z}_{\ge1})^n$ with $g(\dd)\in\mathbb{Z}_{\ge1}$, 
\beq\label{C(d)-1/pi}
\frac 1\pi \,-\, \frac{17}{36\pi\thin g(\dd)}\+ O\Bigl(\frac1{g(\dd)^2} \Bigr) \;\; \le \;\; C(\dd) 
\;\; \le  \;\; \frac 1\pi  \,-\, \frac2{9\pi\thin g(\dd)} \+ O\Bigl(\frac1{g(\dd)^2} \Bigr) 
\eeq
as $g(\dd)\to \infty$. 
This motivates the following 
\begin{thm}\label{thm1}
For $n\ge1$ and $\dd\in  (\mathbb{Z}_{\ge1})^n$ satisfying $g(\dd)\in\mathbb{Z}_{\ge1}$, we have
\beq
\label{STRONG} C(\dd) \= \dfrac1\pi \+ \text O\thin\Bigl(\dfrac1 {g(\dd)}\Bigr)
\eeq
uniformly as $g(\dd)\to\infty$. 
\end{thm}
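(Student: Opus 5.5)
The plan is to follow the strategy of~\cite{GNYZ} for BGW numbers. We first extract from the matrix-resolvent formula~\eqref{fnmr} an explicit closed formula for $U(\dd)$ as a finite sum over cyclic orderings and over the entries of the products $M(\lambda_{\sigma(1)})\cdots M(\lambda_{\sigma(n)})$. Expanding $1/\prod(\lambda_{\sigma(i)}-\lambda_{\sigma(i+1)})$ in the appropriate region and reading off the coefficient of $\prod\lambda_i^{-d_i-1}$ gives $U(\dd)$ as a signed sum of products of the one-variable coefficients
\[
a_g=\frac{(6g-1)!!}{24^g g!},\qquad b_g=\frac{(6g-5)!!}{24^{g-1}(g-1)!},\qquad \frac{6g+1}{6g-1}\,a_g .
\]
These products are indexed by compositions of the genus $g(\dd)$ among the $n$ points. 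Since each of these coefficients is Gamma-like, their product is dominated by the configurations in which essentially all of the genus sits at a single point. That point contributes a factor $\Gamma(2g+\cdot)$-type growth, which matches the normalizing factor $(2g-3+n)!\,3^{2g-2+n}/2^{2g}$ in~\eqref{defCg}, while the remaining $n-1$ points contribute bounded factors.

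Second, I would recast this as a statement about ratios. One route is to define $C(\dd)$ through the DVV relation~\eqref{DVV}, normalized as in~\eqref{defCg}, which becomes a recursion of the form
\[
C(\dd)=\sum_j w_j(\dd)\,C(\dd^{(j)})+\text{(splitting terms)},
\]
where $\dd^{(j)}$ are the reduced index vectors on the right of~\eqref{DVV}. After normalization the weights $w_j$ sum to $1+O(1/g)$, and the splitting terms are $O(1/g)$ relative to the main terms, uniformly in $n$. For primitive-type data ($d_j\ge1$), the recursion then shows that $C(\dd)$ is a near-average of $C$-values of lower complexity. Iterating until one reaches the one-point values $C(3g'-2)$ gives the result, since these are known to be $\frac1\pi+O(1/g')$ by~\eqref{asymsmallest} and~\eqref{onepoint}. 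The dilaton equation disposes of $d_j=1$ entries at an explicit cost $1+O(1/g)$; here I assume~\eqref{dilaton} is the relation stated just before the theorem. So it suffices to treat $d_j\ge2$, where $n\le 3g-3$ automatically.

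The crux is uniformity in $n$, which may be as large as $3g-3$. The errors must not accumulate over the up to $O(g)$ recursion steps, so it is not enough to know that each step costs $O(1/g)$. The plan is to obtain a telescoping or multiplicative error bound. I would prove a two-sided a priori estimate $c_1\le C(\dd)\le c_2$ by induction using positivity of all terms in~\eqref{DVV}. Then I would show that the deviation $C(\dd)-\frac1\pi$ contracts. Concretely, the splitting terms, once normalized, contribute $O(\sum_k 1/(g_k g))$-type sums, and this total should be $O(1/g)$ because of the $\Gamma$-factor decay. The step I expect to be hardest is controlling the splitting sum $\sum_{I\sqcup J}U(a,\dd_I)U(b,\dd_J)$ uniformly when $n$ is large, since the number of subsets is $2^{n-1}$. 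I would bound it by comparing with the generating function of $C(2^{3g-3})$-type all-twos data, whose asymptotics~\eqref{asymlargest} come from the Painlev\'e~I solution, showing that the normalized splitting contribution is bounded by the corresponding contribution in the extremal case, which is $O(1/g)$.
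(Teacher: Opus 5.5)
There is a genuine gap. You pick the right tool, the normalized DVV recursion \eqref{CVirasoro}, but the mechanism that makes the estimate uniform in $n$ is missing, and the substitutes you propose do not work.

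For $d_1\ge1$ the linear coefficients in \eqref{CVirasoro} sum to exactly $1$. The splitting part is $O(X^{-2})$, not $O(1/g)$. That bound is a direct count (Lemma~\ref{estimatethird}, via $\binom{a_1}{b_1}\binom{a_2}{b_2}\le\binom{a_1+a_2}{b_1+b_2}$), so the $2^{n-1}$ subsets are not the real difficulty. Hence $C(\dd)$ is a convex average of lower data plus a small positive error. This has three consequences:
\begin{itemize}
\item deviations from $\frac1\pi$ are averaged, not contracted;
\item an induction of the form $|C-\frac1\pi|\le K/X$ does not close;
\item since the splitting terms raise $n$, your iteration need not reach the one-point values at all.
\end{itemize}

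The missing idea is Aggarwal's majorant. Choose $d_1=\min_j d_j$. Then the splitting weight satisfies $\frac{2(d_1-1)}{3(X-1)}\le\frac1n\le\frac13$ for $n\ge3$, so $n$ performs a walk with drift toward $n\le2$. At $n\le 2$, $\theta_{X,n}\le\frac1\pi$ is known from \eqref{onepoint} and the DGZZ bound $C(d_1,d_2)\le C(0,3g-1)$. The comparison function $f(X,n)$ of \eqref{defg*} then gives $\theta_{X,n}\le\frac1\pi+O(1/X)$ for $n\le X/5$. The reason is that the base is reached after $O(n)$ steps, with accumulated error $\sum O(X'^{-2})=O(1/X)$.

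The range $X/5<n\le 3X/5$ needs a separate argument: $d_{\min}$ is bounded there, Lemma~\ref{lemmathetaineq} is used (derived from the KdV identities \eqref{Omega11}--\eqref{Omega14}), and one iterates $\theta_{X,n}\le\theta_{X-1,n-1}+O(X^{-2})$. You also overlook that the recursion creates zero entries ($a=0$ or $b=0$). So your reduction to $d_j\ge2$ is not preserved, and these terms must be controlled through the string equation, as in Lemma~\ref{lemmaC0C00}.

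The lower bound is also missing. An a priori constant $c_1\le C(\dd)$ does not give $\frac1\pi-O(1/g)$. The paper proves $C(\dd)\ge C(3g-2)$ (Lemma~\ref{lemmalowerbound}) by induction. It discards the nonnegative quadratic terms and uses the string equation, \eqref{C0dlowerbound} and \eqref{C00dlowerbound}, to compensate for the genus drop in the terms $C(a,b,\dots)$, again with $d_1$ minimal; \eqref{asymsmallest} then finishes.

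Your plan to bound the splitting contribution by the all-twos case via \eqref{asymlargest} would need an extremality statement of the type $C(\dd)\le C(2^{3g-3})$. That is the open upper half of Conjecture~\ref{conjnesting}. The paper's argument avoids \eqref{asymlargest} and Painlev\'e~I altogether, and instead derives Theorem~A from Theorem~\ref{thm1}.

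Finally, the explicit-formula route in your first paragraph does not by itself give uniformity. Formula \eqref{formulaWKeq2} is a signed sum over $(n-1)!$ orderings, so the ``dominant single-point configurations'' cannot simply be isolated. Moreover, bounded factors from the remaining $n-1$ points do not give an $n$-independent error when $n$ is of order $g$.
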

\noindent Theorem~\ref{thm1} 
says that there exists an absolute constant~$K_1$ such that
\beq \Bigl|C(\dd) - \frac{1}{\pi} \Bigr| \,\leq \,  \frac{K_1}{g(\dd)} \label{Cdbound}
\eeq
holds for all $n\ge1$ and $\dd\in (\mathbb{Z}_{\ge1})^n$ with sufficiently large $g(\dd)\in\mathbb{Z}$. 

Our proof, which does not use~\eqref{asymlargest}, is along the lines of~\cite{Agg, GNYZ} and will be given in Section~\ref{secproofofthm1}.
It mainly uses the DVV relation~\eqref{DVV}, which written in terms of~$C(\dd)$ reads
\begin{align}\label{CVirasoro}
&C(\dd)\=\sum_{j=2}^n \frac{2d_j+1}{3\,(X(\dd)-1)}
\, C(d_2,\dots,d_j+d_1-1,\dots,d_n)\nn\\ 
&+\sum_{\substack{a, \thin b\\a+b=d_1-2}} \biggl[\frac{2}{3\,(X(\dd)-1)}\, C(a,b,d_2,\dots,d_n) \nn\\
&+\sum_{I\sqcup J=\{2,\dots,n\}} \frac{(X(a,\dd_{I})-1)! \, (X(b,\dd_{J})-1)!}{6\,(X(\dd)-1)!}\,
C(a,\dd_I) \,
C(b,\dd_J) \biggr]\,,
\end{align}
where 
\begin{align}\label{defX}
	X(\dd)\:\frac{1}{3}\sum_{j=1}^{n}(2d_j+1) \= 2 \, g(\dd)-2+n \,.
\end{align}
In particular, 
\begin{align}
	&C(0,\dd) \= \sum_{j=1}^n \frac{2d_j+1}{3(X(0,\dd)-1)}\,
	C(d_1,\dots,d_j-1,\dots,d_n)\,,\label{string}\\
	&C(1,\dd) \= C(\dd)\,. \label{dilaton}
\end{align}

An application of Theorem~\ref{thm1} will be given in Section~\ref{secproofofthm1}. 

With the help of~\eqref{string} we can improve Theorem~\ref{thm1} as follows:
\begin{thm}\label{thm2}
For $n\ge1$ and $\dd\in  (\mathbb{Z}_{\ge0})^n$ satisfying $g(\dd)\in\mathbb{Z}_{\ge1}$, we have
\beq\label{Cdasywithp0}
C(\dd) \= \dfrac1\pi \prod_{j=1}^{p_0(\dd)} \Bigl(1+\frac{2+j-p_0(\dd)}{3X(\dd)-3p_1(\dd)-3j}\Bigr) \+ \text O\thin\Bigl(\dfrac1 {g(\dd)}\Bigr)\,,
\eeq
uniformly as $g(\dd)\to\infty$, where $p_i(\dd)$, $i\ge0$, denotes the multiplicity of~$i$ in~$\dd$.
\end{thm}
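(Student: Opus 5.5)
We reduce to Theorem~\ref{thm1} by eliminating zeros and ones one at a time. By the dilaton equation~\eqref{dilaton} the ones can be removed without changing $C$. Let $\dd'$ be $\dd$ with all $1$'s deleted. Then $X(\dd')=X(\dd)-p_1(\dd)$, $g(\dd')=g(\dd)$ and $p_0(\dd')=p_0(\dd)$, so the right-hand side of~\eqref{Cdasywithp0} is the same for $\dd$ and $\dd'$. We may therefore assume $p_1(\dd)=0$.

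We then induct on $p=p_0(\dd)$; the case $p=0$ is exactly Theorem~\ref{thm1}. For $p\ge1$, write $\dd=(0,\dd^*)$ and apply the string equation~\eqref{string}:
\[
C(0,\dd^*)=\sum_{j}\frac{2d^*_j+1}{3(X(\dd)-1)}\,C(\dd^*-e_j).
\]
The terms with $d^*_j=0$ vanish. Lowering an entry $d^*_j\ge2$ gives a vector with the same number $p-1$ of zeros and no new one unless $d^*_j=2$. Lowering an entry equal to $1$ cannot occur after the reduction above.

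This suggests a cleaner route: prove directly that for $\dd$ with no ones the string equation reproduces the product formula. Instead of a naive induction, which accumulates errors of size $p\cdot O(1/g)$, we set up an induction whose error is controlled multiplicatively. Define $F(\dd)$ as the main term on the right of~\eqref{Cdasywithp0}, and show the approximate identity
\[
F(\dd)=\sum_j \frac{2d^*_j+1}{3(X(\dd)-1)}\,F(\dd^*-e_j)\,\bigl(1+O(\text{small})\bigr).
\]
Note that $\sum_j (2d^*_j+1)/3 = X(\dd)-1/3$ minus the zero contributions $p-1$ times $1/3$. The $j$ with $d^*_j=2$ produce vectors with one extra $1$, which by dilaton reduce to $X$ lowered by one further. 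This should explain the shape $3X-3p_1-3j$ in the denominators, with $j$ indexing successive removals.

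The main obstacle is uniformity when $p$ is comparable to or larger than $g$. The product can then differ substantially from $1$: its factors are about $1+(2+j-p)/(3X)$, and since $X\ge p$ these stay bounded, but the product may become large or small. The errors must therefore be relative, of the form $F\cdot O(1/g)$, and must not compound over $p$ steps. To handle this, we would unfold the string recursion completely. $C(\dd)$ becomes a weighted sum over ways of lowering entries, a sum of $C$-values of zero-free vectors $\mathbf e$ (after dilaton) of genus $g$, each equal to $1/\pi+O(1/g)$ by Theorem~\ref{thm1}. Since all terminal vectors share the genus $g$, the error is uniformly $O(1/g)$ times the total weight. The total weight $W(\dd)$ is the value of the same recursion with $C\equiv1$ on zero-free vectors, and it is nonnegative.

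It then remains to prove the combinatorial identity that $W(\dd)$ equals the product in~\eqref{Cdasywithp0} exactly, or at least with relative error $O(1/g)$, and that it is bounded uniformly. We would verify this by induction on $p$ using the recursion for $W$. The key point is that the weight depends only on $X$, $p_0$ and the counts of $1$'s created, not on the full shape of $\dd$. This is plausible because $\sum_j(2d_j+1)$ over the positive entries depends only on $X$ and $p_0$. This computation, including tracking the entries equal to $2$ that become $1$, is the step we expect to be hardest.
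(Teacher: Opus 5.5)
Your unfolding step is sound, and it gives more than the paper's $\theta/\phi$ bookkeeping. After discarding the $1$'s by~\eqref{dilaton}, iterating~\eqref{string} writes $C(\dd)$ as a nonnegative combination of values $C(\mathbf e)$ with $\mathbf e\in(\ZZ_{\ge2})^m$ and $g(\mathbf e)=g(\dd)$. Each step has total weight $1+\frac{3-\alpha}{3(Y-1)}$, where $\alpha$ is the current number of zeros and $Y$ the current value of $X-p_1$. Hence $W(\dd)\le 1+O(1/g)$, and Theorem~\ref{thm1} gives $C(\dd)=\frac1\pi W(\dd)+O(1/g)$ uniformly.

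The gap is the last step, and it cannot be closed. The weight sum at each step is shape-independent, but the next value of $Y$ is not. Lowering an entry $\ge3$ decreases $Y$ by $1$; lowering a $2$ decreases it by $2$, because the new $1$ is discarded. So $W(\dd)$ is not a function of $X,p_0,p_1$. The denominators $3X-3p_1-3j$ in~\eqref{Cdasywithp0} use only the initial $p_1$ and drop by $3$ per step, so they describe exactly the paths on which no new $1$ is created. Accordingly, the product equals $W$ for $\dd=(0^k,3g-2+k)$. For $\dd_k=(0^k,2^{3g-3+k})$, where $X=5g-5+2k$ and $Y$ always drops by $2$, one finds exactly
\[
C(\dd_k)=C(2^{3g-3})\,W_k\,,\qquad W_k=\prod_{j=1}^{k}\Bigl(1+\frac{2+j-k}{3X-6j+3}\Bigr)\,,
\]
whereas the claimed main term is $\frac1\pi P_k$ with $P_k=\prod_{j=1}^{k}\bigl(1+\frac{2+j-k}{3X-3j}\bigr)$. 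Comparing denominators, for $k\to\infty$ with $k=O(\sqrt g)$ one gets $\log(P_k/W_k)\sim\frac{k^3}{18X^2}\sim\frac{k^3}{450\,g^2}$. So for $k\asymp\sqrt g$, where $W_k\asymp1$, one has $P_k-W_k\asymp g^{-1/2}$. Thus $W$ does not match the product even up to a relative error $O(1/g)$.

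Since $C(2^{3g-3})=\frac1\pi+O(1/g)$ (the case $p_0=0$), this family is a counterexample to~\eqref{Cdasywithp0} as stated. Along $\dd_k$ with $k\asymp\sqrt g$ the error is of exact order $g^{-1/2}$, not $O(1/g)$. The delicate range is therefore $p_0\asymp\sqrt g$, not $p_0$ comparable to $g$.

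The paper's proof breaks at the corresponding point. Its upper bound~\eqref{thetaYalphaasy} uses the denominators $3(Y-j)$ (every drop equal to $1$), and its lower bound~\eqref{phiYalphaasy} uses $3(Y+1-2j)$ (every drop equal to $2$). The closing assertion that the two differ by $O(Y^{-1})$ uniformly in $\alpha$ is false: their ratio is about $\exp\bigl(\alpha^3/(18Y^2)\bigr)$, a difference of order $Y^{-1/2}$ at $\alpha\asymp\sqrt Y$.

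What your argument does prove is the shape-dependent statement $C(\dd)=\frac1\pi W(\dd)+O(1/g)$. Bracketing $W$ between those two extreme products, as the paper's bounds do, then gives~\eqref{Cdasywithp0} with $O(g^{-1/2})$ in place of $O(1/g)$. That weaker form still suffices for Corollary~\ref{cor1}.
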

\noindent The proof is given in Section~\ref{secproofofthm1}. 
Theorem~\ref{thm2} says that there exists an absolute constant $K_2>0$ such that
\begin{align}
\biggl|C(\dd) \m \dfrac1\pi \prod_{j=1}^{p_0(\dd)} 
\Bigl(1+\frac{2+j-p_0(\dd)}{3X(\dd)-3p_1(\dd)-3j}\Bigr) \biggr| \;\leq\; 
\frac{K_2}{g(\dd)}\,, 
\end{align}
for all $n\ge1$ and $\dd\in (\mathbb{Z}_{\ge0})^n$ satisfying $g(\dd)\in\mathbb{Z}_{\ge1}$.
Alternatively, we can write~\eqref{Cdasywithp0} as
\beq
C(\dd) \= \dfrac1\pi \, \frac{\bigl(\frac{2}{3}\bigr)^{p_0(\dd)}\,\bigl(\frac{3X-3p_0(\dd)-3p_1(\dd)+2}2\bigr)_{p_0(\dd)}}{(X-p_1(\dd)-p_0(\dd))_{p_0(\dd)}} \+ \text O\thin\Bigl(\dfrac1 {g(\dd)}\Bigr)\,,
\eeq
as $g(\dd)\to\infty$. Here, $(a)_b:=a(a+1)\cdots(a+b-1)$ denotes the Pochhammer symbol.	

As a particular example of Theorem~\ref{thm2}, we will prove the following
\begin{cor}\label{cor1}
As $g\to\infty$, for $k=O(\sqrt{g})$ we have the uniform leading asymptotics
\begin{align}\label{C02}
C(0^k,2^{3g-3+k}) \;\sim\; \frac{1}{\pi}\, e^{-\frac{k^2}{30g}}\,,
\end{align}
and for $k/\sqrt{g}\to \infty$ we have $C(0^k,2^{3g-3+k})\to0$.
\end{cor}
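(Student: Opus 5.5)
Apply Theorem~\ref{thm2} to $\dd=(0^k,2^{3g-3+k})$, for which $p_0(\dd)=k$ and $p_1(\dd)=0$. We have $n=3g-3+2k$, and the dimension constraint gives $g(\dd)=g$ and $X(\dd)=2g-2+n=5g-5+2k$. Theorem~\ref{thm2} then yields
\begin{align*}
C(0^k,2^{3g-3+k}) \= \frac1\pi \prod_{j=1}^{k}\Bigl(1+\frac{2+j-k}{15g-15+6k-3j}\Bigr) \+ O\Bigl(\frac1g\Bigr)\,,
\end{align*}
uniformly in $k$. Everything therefore reduces to an elementary asymptotic analysis of the product $P_k:=\prod_{j=1}^k\bigl(1+\frac{2+j-k}{15g-15+6k-3j}\bigr)$. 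The $O(1/g)$ error is negligible compared with $\frac1\pi e^{-k^2/(30g)}$ whenever $k=O(\sqrt g)$, since the latter is then bounded below by a positive constant. For the regime $k/\sqrt g\to\infty$ it suffices to show $P_k\to0$.

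For $k=O(\sqrt g)$, I will substitute $i=k-j$, so that the $i$-th factor is $1+\frac{2-i}{15g-15+3k+3i}$. Each numerator is $O(k)=O(\sqrt g)$, each denominator is $15g+O(\sqrt g)$, and each factor is $1+O(g^{-1/2})$. Taking logarithms,
\begin{align*}
\log P_k \= \sum_{i=0}^{k-1}\frac{2-i}{15g}\bigl(1+O(k/g)\bigr) \+ O\Bigl(\sum_{i<k}\frac{i^2+1}{g^2}\Bigr)\,.
\end{align*}
The main sum is $-\frac{k(k-1)/2-2k}{15g}=-\frac{k^2}{30g}+O(k/g)$. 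The remaining errors are $O(k^3/g^2)+O(k^2/g\cdot k/g)$, which are $o(1)$ when $k=O(\sqrt g)$. Hence $P_k=e^{-k^2/(30g)}(1+o(1))$, and this proves~\eqref{C02}.

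For $k/\sqrt g\to\infty$, I will first check that every factor lies in $(0,1]$ for $i\ge2$. The denominator is $15g-15+3k+3i>0$, and for $i\ge2$ the numerator satisfies $i-2<15g-15+3k+3i$, so the factor is positive and at most $1$. Thus $P_k\le\bigl(1+\tfrac{2}{15g-15+3k}\bigr)\bigl(1+\tfrac1{15g-15+3k}\bigr)\prod_{2\le i\le m}\bigl(1-\frac{i-2}{15g+3k+3i}\bigr)$ for any $m\le k-1$. If $k\le g$, I take $m=k-1$; the denominators are $O(g)$, so $\log P_k\le O(1/g)-c\,k^2/g\to-\infty$. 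If $k>g$, the denominators are $O(k)$, and the factors with $i\ge k/2$ are at most $1-\frac{k/2-2}{15g+6k}\le1-c'$ for some absolute $c'>0$. There are about $k/2$ of them, so $P_k\le Ce^{-c'k/2}\to0$. In both cases $C(0^k,2^{3g-3+k})=\frac1\pi P_k+O(1/g)\to0$.

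The only step that needs real care is the uniform bookkeeping: the error terms in the logarithm expansion must stay $o(1)$ across the whole range $k=O(\sqrt g)$. That holds because the worst error is $O(k^3/g^2)=O(g^{-1/2})$. For the second regime I still have to check that the bound is uniform when $k$ is much larger than $g$, since $k$ is unbounded relative to $g$ there; splitting into the cases $k\le g$ and $k>g$ as above handles this.
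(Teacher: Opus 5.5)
Your proof is correct and follows the paper's route. Like the paper, you specialize Theorem~\ref{thm2} to $p_0=k$, $p_1=0$, $X=5g-5+2k$ and then do elementary asymptotics of the explicit product; you use a termwise expansion of the logarithm with uniform $O(k^3/g^2)$ error control, where the paper rewrites the product as a ratio of Pochhammer symbols and applies Stirling's formula. Your split into $k\le g$ versus $k>g$ for the regime $k/\sqrt g\to\infty$ is more careful than the paper's one-line bound by $\exp\bigl(\frac{k(5-k)}{6(5g-5+k)}\bigr)$: that bound replaces every denominator by the smallest one, so it is not literally an upper bound for the negative terms, although a corrected bound gives the same conclusion.
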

In terms of the original Witten's intersection numbers, \eqref{C02} reads
\begin{align}
\frac{2^{2g}\, 5^{3g-3+2k}}{3^{2g-2}\, (5g-6+2k)!}\,\int_{\overline{\mathcal{M}}_{g,\thin 3g-3+2k}} \psi_1^2\cdots \psi_{3g-3+k}^2 \; \sim \;
\frac{1}{\pi}\, e^{-\frac{k^2}{30g}}\,.
\end{align}
\begin{remark}
Theorem~\ref{thm2} implies Aggarwal's result~\eqref{asyAgg}.
Actually, Corollary~\ref{cor1} cannot be deduced from~\eqref{asyAgg}.
\end{remark}

Let us proceed to consider the full asymptotics of the normalized Witten's intersection numbers.
It was shown by Liu and Xu~\cite{LX} (cf.~also~\cite{GY}) that 
for fixed $n\ge1$ and fixed $\dd'=(d_1,\dots,d_{n-1})\in(\mathbb{Z}_{\ge0})^{n-1}$, 
the quotient $C(\dd',3g-3+n-|\dd'|)/C(3g-2)$ is a rational function of~$g$, 
and therefore has a full asymptotic expansion which is a power series of~$g^{-1}$. 
Then by using Stirling's formula and the 1-point formula~\eqref{onepoint} we know that 
$C(\dd',3g-3+n-|\dd'|)$ also has a full asymptotic expansion which is a power series of~$g^{-1}$. 
The polynomiality conjecture from~\cite{GY} predicts that for each~$k$ the coefficient of $g^{-k}$ in the large genus expansion of  
$C(\dd',3g-3+n-|\dd'|)$ is a polynomial of~$n$ and the multiplicities
of the arguments, and also that only the multiplicities of $0, 1, \dots, [3k/2]-1$ are involved. This conjecture  
was proved by Eynard {\it et al}~\cite{EGGGL} based on a formula proved in~\cite{Zhou, DYZ} (cf.~\cite{BDY}).

Following~\cite{GNYZ}, we will present and prove an improved version (see Theorem~\ref{thmpoly} below) of 
the polynomiality conjecture. 
For $n\ge1$, $\dd=(d_1,\dots,d_n)\in(\mathbb{Z}_{\ge0})^n$ satisfying $g(\dd)\in\mathbb{Z}$, we 
introduce a new normalization $\widehat{C}(\dd)$ of Witten's intersection numbers by
\begin{align}
\label{defChat}
\widehat{C}(\dd)\:\frac{C(\dd)}{\gamma(X(\dd))}\,,
	\end{align}
where
	\begin{align} \g(X) \: \frac{2^X}{\sqrt{\pi}\,3^{\frac{3X+1}2}}\,\frac{\Gamma\bigl(\frac{3X}2+1\bigr)}{\Gamma\bigl(\frac{X+3}2\bigr)\, \Gamma(X)}\,.
\end{align}
According to~\eqref{onepoint}, we know that $\widehat{C}(3g-2)\equiv1$ for any $g\ge1$. 
\begin{thm}\label{thmpoly} 
For any fixed $n$ and fixed $\dd'\in(\mathbb{Z}_{\ge2})^{n-1}$, the numbers $\widehat{C}(\dd)$ satisfy
\begin{align}\label{Chatdexpansionnew}
\widehat{C}(\dd) \;\sim\; \sum_k \frac{\widehat{c}_k(p_2(\dd'),p_3(\dd'),\dots)}{X(\dd)^k}\,, \qquad d_n\to\infty\,,
\end{align}
where $\dd=(\dd',d_n)$, $\widehat{c}_k$ are universal polynomials of $p_2,p_3,\dots$ having rational coefficients, 
with $\widehat{c}_0\equiv1$ and $\widehat{c}_k|_{p_b\equiv 0}=0$ $(k\ge1)$.
Moreover, under the degree assignments 
\begin{equation}\label{degreeassign}
\deg \, p_d \= 2d+1\quad (d\geq 1)\,,
\end{equation}
the polynomials $\widehat{c}_k$, $k\ge1$, satisfy the degree estimates
\begin{equation}\label{degreeest}
		\deg \, \widehat{c}_k \,\leq\, 3k-1\,.
	\end{equation} 
\end{thm}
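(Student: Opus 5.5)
The plan is to argue by induction on $n$, using the DVV relation~\eqref{CVirasoro} with the first argument taken to be one of the fixed entries of $\dd'$. We extract from it a recursion for the coefficients $\widehat c_k$ that makes both polynomiality and the degree bound~\eqref{degreeest} visible.

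\emph{Setup and existence of the expansion.} The base case $n=1$ is $\widehat C(3g-2)\equiv1$, which follows from~\eqref{onepoint}. For $n\ge2$, write $\dd=(d_1,\dots,d_{n-1},d_n)$ with $d_1,\dots,d_{n-1}\ge2$ fixed, $d_n\to\infty$, and apply~\eqref{CVirasoro} with $d_1$ as the distinguished argument. Divide through by $\gamma(X(\dd))$ and sort the resulting terms.

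\begin{itemize}
\item[(i)] \emph{Merging terms.} For $j\le n-1$ the term is a $C$ with $n-1$ arguments, all still $\ge2$ except the large one. For $j=n$ the large argument is shifted by $d_1-1$. In both cases $X$ drops by $2/3$ exactly.
\item[(ii)] \emph{Splitting into $C(a,b,\dots)$.} These terms have $a+b=d_1-2$, so $a,b$ are fixed small numbers. Entries equal to $0$ or $1$ are removed by~\eqref{string} and~\eqref{dilaton}, which reduces again to primitive data (plus the large entry) with at most $n$ arguments and strictly smaller fixed part $|\dd'|$.
\item[(iii)] \emph{Quadratic terms.} Exactly one factor contains $d_n$; the other factor is a fixed finite number. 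The factorial ratio $(X_{\rm small}-1)!\,(X_{\rm big}-1)!/(X(\dd)-1)!$ behaves like $X^{-X_{\rm small}}$ times an expansion in $1/X$.
\end{itemize}

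Each ratio $\gamma(X-\delta)/\gamma(X)$, with $\delta$ fixed and rational, has a full expansion in $1/X$ by Stirling's formula. The factor $(X-1)^{-1}$ together with the normalization of $\gamma$ is arranged so that the leading term reproduces $\widehat c_0=1$. A double induction, on $n$ and then on $|\dd'|$ (which decreases in every term above), gives existence of the expansion~\eqref{Chatdexpansionnew} with rational coefficients.

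\emph{Polynomiality and the degree bound.} For this I would follow the finite-difference strategy of~\cite{GNYZ}. View $\widehat c_k(\dd')$ as a function on multisets of entries $\ge2$. It suffices to prove that for each fixed $d\ge2$ the difference
\[
\widehat c_k(\dd'\cup\{d\})-\widehat c_k(\dd')
\]
is, as a function of $\dd'$, itself a polynomial in the $p_b$ of degree $\le 3k-1-(2d+1)$. This would be proved by induction on $k$, and it also forces the difference to vanish once $2d+1>3k-1$, which gives both $\widehat c_k|_{p\equiv0}=0$ and the finiteness of the set of $p_b$ that occur. The recursion for the differences is read off from (i)--(iii): adding an entry $d$ and using it as the distinguished argument in DVV expresses the difference through lower-$k$ coefficients. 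The reason is that each term carries an explicit $X^{-1}$, or a factor $X^{-X_{\rm small}}$ with $X_{\rm small}\ge(2d+1)/3$. The degree weight $\deg p_d=2d+1=3X(d)$ is chosen exactly so that the power of $1/X$ compensates the degree of the added monomial.

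The main obstacle is the bookkeeping in this last step. The $j$-sum in (i) runs over all $n-1$ entries, so it produces factors linear in the multiplicities, $\sum_j(2d_j+1)=\sum_b(2b+1)p_b$. One must check that, after expanding $\gamma(X-2/3)/\gamma(X)$ and the factor $3(X-1)$, the zeroth-order contributions cancel exactly against $\widehat c_k(\dd')$, leaving something of degree at most $3k-1-(2d+1)$. Only then does the degree count close. To get this I would first treat the $k=1$ case explicitly as a consistency check, and then organize the general cancellation via a generating series in $1/X$.
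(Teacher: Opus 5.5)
\emph{There is a genuine gap, in the step you yourself flag.} Your reduction is fine: writing $\Delta_dF(\mathbf p)=F(\mathbf p+\mathbf e_d)-F(\mathbf p)$, it suffices to show that $\Delta_d\widehat c_k$ is a polynomial of degree $\le 3k-2-2d$, in particular zero once $2d+1>3k-1$. The justification you give for this cannot work, however.

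\begin{itemize}
\item In DVV with the new entry $d$ distinguished, the merging and splitting terms each carry a single factor $1/(X-1)$, independent of $d$, and there are $d-1$ splitting terms.
\item The small factor of a quadratic term can be $C(1)=\frac16$, with $X_{\rm small}=1$, for every $d\ge3$. So the bound $X_{\rm small}\ge(2d+1)/3$ is false.
\end{itemize}

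Hence for $d\ge3k-1$ the right-hand side of the difference recursion \eqref{hatCkrecursion} stabilizes to a $d$-independent polynomial, a priori of degree up to $3k-4$. It does not decay with $d$. That this stable value vanishes is a global cancellation which depends on the initial data.

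Already for $k=2$ and $\dd'=\emptyset$, in the paper's coefficients $\tilde c_k$, the stable right-hand side is $\tilde c_1(\mathbf 0)+\frac89+\frac1{18}$:
\begin{itemize}
\item $\frac89$ comes from the boundary splittings $a=0$ or $b=0$;
\item $\frac1{18}$ comes from the genus-one factor $C(1)$;
\item the sum is zero only because $\tilde c_1(\mathbf 0)=-\frac{17}{18}$ is the first Stirling coefficient of \eqref{onepoint}.
\end{itemize}
In your normalization the same number enters through $\gamma(X-1)/\gamma(X)=1-\frac{17}{18X^2}+\cdots$. So a term-by-term degree count, or a reorganization into a generating series, cannot close the induction. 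Some input beyond the shape of the recursion is needed to kill the stable value.

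\emph{What the paper supplies.} It keeps $0$'s and $1$'s as variables, $\tilde c_k\in\mathbb Q[p_0,p_1,\dots]$ with $\deg p_0=1$ and $\deg p_1=3$. You would need this too, since splitting and quadratic terms inevitably create such entries. The argument then runs as follows.
\begin{itemize}
\item By induction, $\Delta_d\tilde c_k$ is polynomial and independent of $d$ for $d\ge3k-1$.
\item Comparing mixed differences $\Delta_d\Delta_e$ forces this stable value to be a constant $A$, so $\tilde c_k=h+A\sum_ip_i$.
\item $A=0$ is proved at order $k+1$. A nonzero $A$ would produce a term $\frac43Ad$ in $\Delta_d\tilde c_{k+1}(\mathbf 0)$ (see \eqref{ck+1diff}), i.e.\ linear growth in $d$ of two-point coefficients. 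This contradicts the explicit two-point formula \eqref{twopointZograf}.
\item Only then can one subtract the stabilized identity \eqref{hatCkidentity} from \eqref{hatCkrecursion}. Every surviving term is a genuine higher difference, such as $p_i\,\Delta_{i+d-1}\tilde c_{k-1}$, $\Delta_a\Delta_{d-2-a}\tilde c_{k-1}$, $\Delta_a\tilde c_{k-1}$ with $a\ge d-1$, or $\Delta_d\tilde c_l$ with $l<k$. Each has degree $\le3k-2-2d$ by induction.
\item Dividing by the $\mathbf p$-independent series of $\gamma$ and setting $p_0=p_1=0$ gives $\widehat c_k$.
\end{itemize}

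\emph{Smaller points.}
\begin{itemize}
\item Every linear DVV term lowers $X=2g-2+n$ by exactly $1$, not $2/3$.
\item The splitting terms $C(a,b,d_2,\dots,d_n)$ can have $n+1$ primitive arguments. Your existence induction must therefore run on the fixed weight $|\dd'|$ (and then on the number of zeros), not on $n$. With that change it is a legitimate substitute for the paper's appeal to Liu--Xu.
\item $\widehat c_k|_{p\equiv0}=0$ is simply $\widehat C(3g-2)\equiv1$. It is not a consequence of vanishing differences.
\end{itemize}
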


\smallskip

Several explicit expressions for $\tilde c_k$ and $\widehat{c}_k$ are given in the following table.
\begin{table}[h!]
\begin{center}
\renewcommand\arraystretch{1.3}
\tabcolsep=0.6cm
\begin{tabular}{!{\vrule width 1.5pt}c|c|c!{\vrule width 1.5pt}}
			\hlinew{1.2pt}
			$k$ &  $\tilde{c}_{k}(p_1,p_2,\dots)$ & $\widehat{c}_{k}(p_1,p_2,\dots)$ \\ 
			\hlinew{1.2pt}
			0 & 1 & 1 \\
			\hline
			1 & $-\frac{17}{18}$ & 0 \\ 
			\hline
			2 & $\frac{613}{648}-\frac{5}{72}\,p_2$ & $-\frac{5}{72}\,p_2$ \\
			\hline
			3 & $\frac{65}{648}\,p_2+\frac{35}{216}\,p_3-\frac{33713}{34992}$ 
			& $\frac{5}{144}\,p_2+\frac{35}{216}\, p_3$ \\
			\hline
			\multirow{2}{*}{4}
			& $\frac{1225}{10368}\,p_2^2+\frac{130}{729}\,p_2+\frac{9415}{31104}\,p_3$ & $\frac{1225}{10368}\,p_2^2+\frac{1435}{5184}\,p_2
			+\frac{175}{384}\, p_3$ \\
			& $-\frac{1225}{3456}\,p_4
			-\frac{385}{3456}\,p_5+\frac{2424889}{2519424}$ &
			$-\frac{1225}{3456}\,p_4-\frac{385}{3456}\,p_5$ \\
			\hlinew{1.2pt}
		\end{tabular}
	\end{center}
	\caption{Explicit expressions for $\tilde c_{k}(p_2,p_3,\dots)$, $\widehat{c}_{k}(p_2,p_3,\dots)$ with $k=0,\dots,4$}
	\label{tablepoly}
\end{table}

\smallskip

\noindent {\bf Organization of the paper} 
In Section~\ref{secgeneral} we give a closed formula for Witten's intersection numbers. 
In Section~\ref{secproofofthm1} we prove Theorems \ref{thm1} and~\ref{thm2}.
In Section~\ref{secproofofthm2} we prove Theorem~\ref{thmpoly}. 

\smallskip

\noindent {\bf Acknowledgements} 
The work is supported by NSFC 12371254 and CAS YSBR-032.
Part of the work was done during visits of J.G. and D.Y. to MPIM, Bonn and visits of D.Z. to USTC; the authors thank both institutions for excellent working conditions.

\section{An explicit formula for Witten's intersection numbers}
\label{secgeneral}
In this section, 
based on~\eqref{fnmr} we derive an explicit formula 
for Witten's intersection numbers. 

By taking the Laurent expansion on the right-hand side of~\eqref{fnmr} with $n=2$, the following formula was obtained in~\cite{BDY}:
\begin{align}\label{twopoint}
\int_{\Mgtwobar} \psi_1^{d_1}\psi_2^{d_2} \= \frac{\sum_{l=0}^{d_1} \, (d_1+1-l)\,  
\xi_{l-1,3g-l}}{(2d_1+1)!!(2d_2+1)!!}\,,
\end{align}
where $g,d_1,d_2\ge0$ satisfying $d_1+d_2=3g-1$ and 
\begin{align*}
\xi_{k_1,k_2} \= \left\{
\begin{array}{ll}
\frac{(6g_1-5)!!\,(6g_2-5)!!}{2\cdot 24^{g_1+g_2-2}\,(g_1-1)!\,(g_2-1)!}\,,\quad & k_1=3g_1-2, k_2=3g_2-2\,,\\
-\frac{(6g_1-1)!!\,(6g_2-1)!!}{ 24^{g_1+g_2}\,g_1!\,g_2!}\,\frac{6g_2+1}{6g_2-1}\,,\quad &k_1=3g_1, k_2=3g_2-1\,,\\
\frac{(6g_1-1)!!\,(6g_2-1)!!}{2\cdot 24^{g_1+g_2}\,g_1!\,g_2!}\,\frac{6g_1+1}{6g_1-1}\,,\quad &k_1=3g_1-1, k_2=3g_2\,,\\
0\,, \quad &{\rm otherwise\,.}
\end{array}\right.
\end{align*}
Zograf~\cite{Zograf} derived another formula for two-point Witten's intersection numbers:
\begin{align}\label{twopointZograf}
	C(d_1,3g-1-d_1)\=\frac{1}{54^g\, (2g-1)! \, g}\,\sum_{d=-1}^{d_1-1} \eta_{g, \thin d}\,,
\end{align}
where $\eta_{g,d}$, $d\ge-1$, are defined by
\begin{equation}
\eta_{g,\thin d}\=(6g-3-2d)!!\, (2d+1)!!\cdot\left\{
	\begin{array}{ll}
		\frac{g-2j}{j! \, (g-j)!}\,, &\quad d=3j-1\,, \vspace{0.5ex}\\	-\frac{2}{j!\, (g-1-j)!}\,, &\quad d=3j\,, \vspace{0.5ex}\\	\frac{2}{j!\, (g-1-j)!}\,, &\quad d=3j+1\,.
	\end{array}
	\right.
\end{equation}
The equivalence of \eqref{twopoint} and \eqref{twopointZograf} was shown in~\cite{Guo}.

By taking the Laurent expansion on the right-hand side of~\eqref{fnmr} with $n\ge2$, a formula for Witten's intersection numbers was derived in~\cite{GY}, whose
expression written in terms of $C(\dd)$ is given in
the following proposition.
Before stating it, we introduce a notation: 
\begin{align}\label{defkappa}
a_{k_1,\dots,k_n}\:
\frac{2^{2g(\mathbf{k})}\, \,\mathrm{tr} \, A_{k_1}\cdots A_{k_n}}{3^{2g(\mathbf{k})+n-2}\, (2g(\mathbf{k})+n-3)!}\,,
\end{align}
for ${\bf k}=(k_1,\dots,k_n) \in \ZZ_{\ge-1}^n$, 
and $a_{k_1,\dots,k_n}$ are defined as~0 if some of $k_j$ is less than or equal to~$-2$.
Here
$$ M(\lambda)\;=:\;\sum_{k\ge-1} A_k \, \lambda^{-k}\,.$$

\begin{prop}[\cite{GY}]\label{propformulaWK1}
For $n\ge 2$ and $\dd=(d_1,\dots,d_n)\in (\ZZ_{\ge 0})^n$, we have
\begin{align}\label{formulaWKeq1}
C({\bf d}) \= \sum_{\substack{\sigma\in S_{n} \\ \sigma(n)=n}} (-1)^{|S_{\sigma}^-|+1}
\sum_{\substack{\underline{J}\in (\mathbb{Z}+\frac12)^n \\ \{1\leq q\leq n\mid J_q>0\}=S_{\sigma}^{+}}} 
a_{d_{\sigma(1)}+J_1-J_n,\dots,d_{\sigma(n)}+J_n-J_{n-1}} \,.
\end{align}
\end{prop}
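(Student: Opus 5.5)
The plan is to extract coefficients directly from the $n\ge2$ case of~\eqref{fnmr}, and then pass from $U(\dd)$ to $C(\dd)$ by the normalization~\eqref{defCg}.

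\emph{Step 1: normalization.} By~\eqref{defUnormal} and~\eqref{defCg},
\[
C(\dd)=\frac{2^{2g(\dd)}}{3^{2g(\dd)+n-2}\,(2g(\dd)+n-3)!}\,U(\dd).
\]
The term $M(\lambda)=\sum_k A_k\lambda^{-k}$ carries the exponents $k_j$. Any monomial with $\sum_j k_j=\sum_j d_j$ has $g(\mathbf k)=g(\dd)$, so the prefactor in~\eqref{defkappa} is exactly this normalization. Hence it suffices to prove the analogous identity for $U(\dd)$, with $a$ replaced by $\mathrm{tr}\,A_{k_1}\cdots A_{k_n}$.

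\emph{Step 2: reduce the sum over $S_n$.} The summand $\mathrm{tr}\prod_i M(\lambda_{\sigma(i)})\big/\prod_i(\lambda_{\sigma(i)}-\lambda_{\sigma(i+1)})$ is invariant under cyclic rotation of $\sigma$: the trace is cyclic and the cyclic product of differences is unchanged. Each cyclic class has exactly one representative with $\sigma(n)=n$. So the factor $\frac1n\sum_{\sigma\in S_n}$ becomes $\sum_{\sigma(n)=n}$, and the overall minus sign becomes part of the $(-1)^{|S^-_\sigma|+1}$.

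\emph{Step 3: the $\delta_{n,2}$ term.} The term $(\lambda_1+\lambda_2)/(\lambda_1-\lambda_2)^2$ is expanded in the same region as in Step 4. It gives $\sum_{m\ge0}(2m+1)\lambda_2^m\lambda_1^{-m-1}$, in which the exponent of $\lambda_2$ is nonnegative. Such monomials correspond to $d_2=-m-1<0$, so this term contributes nothing to coefficients with $\dd\in(\ZZ_{\ge0})^2$.

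\emph{Step 4: expand the denominators.} Each product $\mathrm{tr}\prod_i M(\lambda_{\sigma(i)})\,\prod_i(\lambda_{\sigma(i)}-\lambda_{\sigma(i+1)})^{-1}$ has genuine poles on the diagonals, and only the total sum is regular there. The whole expression is a Laurent series in the $\lambda_j^{-1}$, so its coefficients can be read off in any fixed region of convergence applied consistently to all terms. I would fix the region $|\lambda_1|\gg|\lambda_2|\gg\cdots\gg|\lambda_n|$, or whichever ordering matches the convention defining $S^\pm_\sigma$ in~\cite{GY}. In that region each factor is expanded with a half-integer index:
\[
\frac{1}{\lambda_a-\lambda_b}=\sum_{J\in\mathbb{Z}_{\ge0}+\frac12}\lambda_a^{-J-\frac12}\lambda_b^{J-\frac12}\quad(|\lambda_a|>|\lambda_b|),
\qquad
\frac{1}{\lambda_a-\lambda_b}=-\sum_{J\in\mathbb{Z}_{<0}+\frac12}\lambda_a^{-J-\frac12}\lambda_b^{J-\frac12}\quad(|\lambda_a|<|\lambda_b|).
\]
So the $q$-th factor, between $\sigma(q)$ and $\sigma(q+1)$, receives an index $J_q$. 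We have $J_q>0$ exactly when $q\in S_\sigma^+$, and each $q\in S^-_\sigma$ contributes a sign $-1$; this produces $(-1)^{|S^-_\sigma|}$.

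\emph{Step 5: extract the coefficient.} The variable $\lambda_{\sigma(q)}$ appears in the factors $q-1$ and $q$, with total exponent
\[
-k_q-(J_q+\tfrac12)+(J_{q-1}-\tfrac12).
\]
Extracting the coefficient of $\lambda_{\sigma(q)}^{-d_{\sigma(q)}-1}$ forces $k_q=d_{\sigma(q)}+J_q-J_{q-1}$, with $J_0:=J_n$ by cyclicity. This gives exactly the indices in~\eqref{formulaWKeq1}. Terms with some $k_q\le-2$ vanish because $A_k=0$ for $k<-1$, which matches the stated convention for $a$.

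\emph{Main obstacle.} The delicate point is the bookkeeping in Steps 4 and 5. One must check that the region-dependent expansions, summed over $\sigma$, reproduce the paper's exact sign and support conditions $\{q:J_q>0\}=S^+_\sigma$. One must also check that the unique expansion is legitimate even though individual terms are singular on the diagonals.
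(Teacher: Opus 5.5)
Your approach is the paper's: the paper's only justification is to Laurent-expand the right-hand side of~\eqref{fnmr}, citing~\cite{GY}. Steps~1--4 are fine. That covers the normalization, the reduction to $\sigma(n)=n$, discarding the $\delta_{n,2}$ term, and the half-integer expansions in the region $|\lambda_1|>\cdots>|\lambda_n|$, which do give the sign $(-1)^{|S^-_\sigma|+1}$ and the support condition $\{q:J_q>0\}=S^+_\sigma$. Region-independence is also legitimate: for fixed $\sum_q k_q$ you have a finite sum of rational functions whose total is a Laurent polynomial.

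The gap is in Step~5. Your exponent $-k_q-J_q+J_{q-1}-1$ for $\lambda_{\sigma(q)}$ is correct. But setting it equal to $-d_{\sigma(q)}-1$ gives $k_q=d_{\sigma(q)}-J_q+J_{q-1}$, not $d_{\sigma(q)}+J_q-J_{q-1}$. So your expansion actually yields
\[
C(\dd)=\sum_{\substack{\sigma\in S_n\\ \sigma(n)=n}}(-1)^{|S^-_\sigma|+1}\sum_{\substack{\underline J\in(\ZZ+\frac12)^n\\ \{q\,:\,J_q>0\}=S^+_\sigma}}a_{d_{\sigma(1)}-J_1+J_n,\,\dots,\,d_{\sigma(n)}-J_n+J_{n-1}}\,,
\]
and the sentence ``this gives exactly the indices in~\eqref{formulaWKeq1}'' rests on an arithmetic slip.

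Choosing a different ordering cannot absorb this. If you expand in $|\lambda_1|<\cdots<|\lambda_n|$ and substitute $J\mapsto -J$, you obtain the displayed indices and support condition. The sign, however, becomes $(-1)^{|S^+_\sigma|+1}=(-1)^n(-1)^{|S^-_\sigma|+1}$. So the right-hand side of~\eqref{formulaWKeq1}, read literally with the paper's $M(\lambda)$ and $S^\pm_\sigma$, equals $(-1)^nC(\dd)$.

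You can confirm this at $\dd=(0,0,0)$, where $a_{\mathbf k}=\frac13\,\mathrm{tr}\,A_{k_1}A_{k_2}A_{k_3}$. Take $A_{-1}=\bigl(\begin{smallmatrix}0&0\\-1&0\end{smallmatrix}\bigr)$, $A_0=\bigl(\begin{smallmatrix}0&-1\\0&0\end{smallmatrix}\bigr)$, $A_1=\frac12\bigl(\begin{smallmatrix}-1&0\\0&1\end{smallmatrix}\bigr)$, and note $A_{-1}^2=0$.
\begin{itemize}
\item The displayed formula reduces to $a_{1,0,-1}-a_{0,1,-1}=-\frac16-\frac16=-\frac13$, whereas $C(0^3)=\frac13$.
\item Your correctly computed version gives $a_{-1,0,1}-a_{0,-1,1}=\frac16+\frac16=\frac13$.
\end{itemize}

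The corrected version is also the one that leads to Proposition~\ref{propformulaWK2}. Write $J_r=J_n+\sum_{q\le r}(d_{\sigma(q)}-k_q)$ and count the admissible half-integers $J_n$: this gives exactly $\omega_{\dd,\sigma,\mathbf k}$. The printed indices instead give the same expression with $\dd-\mathbf k$ replaced by $\mathbf k-\dd$.

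So finish Step~5 correctly and state the conclusion with $-J_q+J_{q-1}$. Equivalently, say explicitly that the displayed formula needs this correction, or an extra factor $(-1)^n$. As written, your argument claims an identity that is off by the sign $(-1)^n$.
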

The following proposition is an analogue of~\cite[Proposition 3]{GNYZ}.
\begin{prop}\label{propformulaWK2}
For $n\ge 2$ and $\dd=(d_1,\dots,d_n)\in (\mathbb{Z}_{\ge 0})^n$,
\begin{align}\label{formulaWKeq2}
C(\dd) \= \sum_{\substack{\sigma\in S_{n} \\ \sigma(n)=n}} (-1)^{|S_{\sigma}^{-}|+1} \sum_{\substack{k_1,\dots,k_n\ge-1\\ k_1+\cdots+k_n=d_1+\cdots+d_n}} 
a_{k_1,\dots,k_n}\,\omega_{\dd,\sigma,\mathbf{k}}\,,
\end{align}
where the sets $S_{\sigma}^{+},\,S_{\sigma}^{-}\subseteq \{1,\dots,n\}$ are denoted as
\begin{align*}
S_{\sigma}^+\=\bigl\{1\leq r\leq n \,\big|\, \sigma(r+1)>\sigma(r)\bigr\}\,,\quad 	S_{\sigma}^-\=\bigl\{1\leq r\leq n \,\big|\, \sigma(r+1)<\sigma(r)\bigr\}\,,
\end{align*} and the numbers $\omega_{\dd,\sigma,\mathbf{k}}$ have the following explicit expression
\begin{align}
\omega_{\dd,\sigma,\mathbf{k}}\=\max\biggl\{0\,,\,\min_{r\in S_{\sigma}^+}\biggl\{\sum_{q=1}^r(d_{\sigma(q)}-k_q)\biggr\} 
+\min_{r\in S_{\sigma}^-}
\biggl\{\sum_{q=1}^r(k_q-d_{\sigma(q)})\biggr\} \biggr\}\,.
\end{align}
\end{prop}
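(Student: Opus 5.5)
Starting from Proposition~\ref{propformulaWK1}, I plan to reorganize the inner sum over the half-integer vectors $\underline{J}$ by collecting terms according to the index vector $\mathbf{k}=(k_1,\dots,k_n)$ with $k_q:=d_{\sigma(q)}+J_q-J_{q-1}$, where $J_0:=J_n$. The sum over $q$ of the increments $J_q-J_{q-1}$ is zero, so automatically $k_1+\cdots+k_n=|\dd|$. Each $k_q$ is an integer, since $J_q-J_{q-1}\in\ZZ$. Because $a_{\mathbf k}$ vanishes as soon as some $k_q\le-2$, only $k_q\ge-1$ contribute. With this change of variables,~\eqref{formulaWKeq1} becomes
\begin{align*}
C(\dd)=\sum_{\sigma}(-1)^{|S_\sigma^-|+1}\sum_{\mathbf{k}} a_{\mathbf{k}}\,\#\bigl\{\underline{J}\in(\ZZ+\tfrac12)^n : J_q-J_{q-1}=k_q-d_{\sigma(q)}\ \forall q,\ \{q\mid J_q>0\}=S_\sigma^+\bigr\}.
\end{align*}
It therefore remains to show that this count equals $\omega_{\dd,\sigma,\mathbf{k}}$.

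To compute the count, fix $\mathbf{k}$. The increments determine $\underline{J}$ up to a common shift, so $J_r=J_n+s_r$, where $s_r:=\sum_{q=1}^r(k_q-d_{\sigma(q)})$ and $s_n=0$. The free parameter is $J_n=:t\in\ZZ+\tfrac12$. The sign constraints then read as follows:
\begin{align*}
t+s_r>0 &\quad\text{for } r\in S_\sigma^+, & t+s_r<0 &\quad\text{for } r\in S_\sigma^-.
\end{align*}
Here I should check that $\{1,\dots,n\}=S_\sigma^+\sqcup S_\sigma^-$ with cyclic convention $\sigma(n+1)=\sigma(1)$; the permutation values are distinct, so each $r$ lies in exactly one of the two sets. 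Thus $t$ ranges over the half-integers in the open interval
\begin{align*}
\max_{r\in S_\sigma^+}(-s_r)<t<\min_{r\in S_\sigma^-}(-s_r).
\end{align*}
Both endpoints are integers, so the number of half-integers in such an interval is $\max\{0,\text{length}\}$. That length is $\min_{r\in S^-_\sigma}(-s_r)+\min_{r\in S^+_\sigma}s_r$. Since $-s_r=\sum_{q\le r}(d_{\sigma(q)}-k_q)$, the expression is exactly the one defining $\omega$, except that the roles of $S^+$ and $S^-$ appear swapped.

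The main point to get right, and the only real obstacle, is this sign convention. I need to verify carefully whether the stated $\omega$ matches $\min_{S^+}\sum(d-k)+\min_{S^-}\sum(k-d)$ under the convention of Proposition~\ref{propformulaWK1}. If it comes out swapped, I will redefine $J\mapsto -J$, equivalently reindex via $J_q$ versus $J_{q-1}$ in the definition of $k_q$. I would fix this by checking a small case such as $n=2$ against~\eqref{twopoint}. I also need to handle the edge case where one of $S^\pm_\sigma$ is empty, in which case the interval is unbounded. Since $\sigma(n)=n$, the index $n-1$ lies in $S^+$ (as $\sigma(n)>\sigma(n-1)$) and $n$ lies in $S^-$, so both sets are nonempty and the count is finite. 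With that settled, substituting the count gives~\eqref{formulaWKeq2}.
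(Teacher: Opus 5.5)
Your reduction is the intended one: collect terms by $\mathbf k$, write $J_r=t+s_r$, and count half-integers $t$ in an open interval with integer endpoints. This is the lattice-point count behind the proof the paper omits, and your bookkeeping is fine (integrality, $s_n=0$, $n-1\in S^+_\sigma$, $n\in S^-_\sigma$). The gap is the step you defer, and it is not cosmetic. Read literally, Proposition~\ref{propformulaWK1} gives the count $\max\{0,\min_{r\in S_\sigma^-}(-s_r)+\min_{r\in S_\sigma^+}s_r\}$, and the formula this produces is false. Take $\dd=(0,0,0)$, with $A_{-1}=-E_{21}$, $A_0=-E_{12}$, $A_1=\frac12\,\mathrm{diag}(-1,1)$ and $A_{-1}^2=0$. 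The only nonzero terms are $\mathbf k=(1,0,-1)$ for $\sigma=\mathrm{id}$ and $\mathbf k=(0,1,-1)$ for $(\sigma(1),\sigma(2),\sigma(3))=(2,1,3)$. Here $a_{1,0,-1}=-\frac16$, $a_{0,1,-1}=\frac16$, and the signs are $+1$ and $-1$, so the total is $-\frac13$ instead of $C(0^3)=\frac13$. The stated $\omega$ instead selects $\mathbf k=(-1,0,1)$ and $(0,-1,1)$ and gives $\frac13$. So Proposition~\ref{propformulaWK1} as printed and the statement you are proving are inconsistent, and it is the statement that is right.

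Your proposed remedies do not repair this. Replacing $J$ by $-J$ throughout only renames the summation variable, so the count is unchanged. Changing $J_q-J_{q-1}$ to $J_{q-1}-J_q$ only in the definition of $k_q$ changes the formula you start from, and that new formula then needs its own justification. The test you propose would not detect the problem either. For $n=2$ the two weights differ by $d_1-k_1$, and $\sum_{k_1+k_2=N}(d_1-k_1)a_{k_1,k_2}=(d_1-\frac N2)\sum_{k_1+k_2=N}a_{k_1,k_2}=0$. This holds because $\mathrm{tr}\,M(\lambda)^2=-2\det M(\lambda)=2\lambda$ has no $\lambda^{-N}$ term for $N\ge0$, so the discrepancy first appears at $n=3$.

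What is missing is the correctly oriented input formula. The $\underline J$-expansion must be used with $\{q\mid J_q>0\}=S^-_\sigma$, or equivalently with increments $J_{q-1}-J_q$ and positive set $S^+_\sigma$. This has to be established from the source of Proposition~\ref{propformulaWK1}, i.e.\ from the Laurent expansion of~\eqref{fnmr}, not chosen to fit the answer. A genuinely three-point check such as $C(0^3)=\frac13$ confirms this is the right orientation. With it, your argument goes through unchanged. The constraints become $\max_{r\in S^-_\sigma}(-s_r)<t<\min_{r\in S^+_\sigma}(-s_r)$. The number of admissible $t\in\ZZ+\frac12$ is then $\max\{0,\min_{r\in S^+_\sigma}\sum_{q\le r}(d_{\sigma(q)}-k_q)+\min_{r\in S^-_\sigma}\sum_{q\le r}(k_q-d_{\sigma(q)})\}=\omega_{\dd,\sigma,\mathbf k}$, as required.
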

\noindent The proof is almost identical to that of~\cite[Proposition~3]{GNYZ} and is
therefore omitted.

Let us give some examples for Proposition~\ref{propformulaWK2}. The $n=2$ case of Proposition~\ref{propformulaWK2} is equivalent to~\eqref{twopoint}.
For $n\ge1$, $\mathbf{e}\in\mathbb{Z}^n$, introduce the notation
\begin{align}\label{defM}
M(\mathbf{e}) \:\max\bigl\{0,\min_{1\leq i\leq n}\{e_i\}\bigr\}\,,
\end{align}
which has the generating function
	\begin{align}\label{generatingM}
		\sum_{e_1,\dots,e_n\ge0}x_1^{e_1-1}\cdots x_n^{e_n-1}\,M(e_1,\dots,e_n)\=\frac{1}{(1-x_1\cdots x_n)\prod_{i=1}^n(1-x_i)}\,.
	\end{align}
Using~\eqref{defM}, the $n=3$ case of Proposition~\ref{propformulaWK2} can be written as
\begin{align}\label{threepointWK}
	C(d_1,d_2,d_3)\=2\sum_{k_1+k_2+k_3=d_1+d_2+d_3 } a_{k_1,k_2,k_3} \, M(d_1-k_1,d_1+d_2-k_1-k_2)\,,
\end{align}
and the $n=4$ case of Proposition~\ref{propformulaWK2} can be written as
\begin{align}
	&C(d_1,d_2,d_3,d_4)\=2\sum_{k_1+k_2+k_3+k_4=d_1+d_2+d_3+d_4}a_{k_1,k_2,k_3,k_4} \nn\\
	&\quad \times \Bigl(M\bigl(d_1-k_1,d_1+d_2-k_1-k_2,k_4-d_4\bigr) \nn\\
	&\qquad -M\bigl(d_1-k_2,d_1+d_2-k_2-k_3,d_1+d_3-k_1-k_2,k_4-d_4\bigr) \nn\\
	&\qquad -M\bigl(d_1-k_1,d_2-k_3,k_2-d_3,k_4-d_4\bigr)\Bigr)\,, \label{fourpointWK}
\end{align}
where we have used the fact that $a_{k_1,k_2,k_3,k_4}=a_{k_2,k_3,k_4,k_1}=a_{k_1,k_4,k_3,k_2}$.

\section{Uniform large genus asymptotics of Witten's intersection numbers} \label{secproofofthm1}
In this section, we prove Theorem~\ref{thm1}. Our proof will mainly use the DVV relation~\eqref{DVV}, 
the techniques introduced by Aggarwal~\cite{Agg} (see also~\cite{GNYZ}), and the improvement in~\cite{GNYZ}.

\subsection{Lower Bound}

In this subsection we give lower bounds for the normalized Witten's intersection numbers $C(\dd)$.
\begin{lemma}\label{lemmalowerbound0}
For $n\ge1$ and for $\dd=(d_1,\dots,d_n)\in(\mathbb{Z}_{\ge0})^n$ satisfying $g(\dd)\in\ZZ_{\ge0}$ and $2g(\dd)-2+n>0$, we have
\begin{align}\label{lowerbound0}
	C(\dd)\;>\; 0\,.
\end{align}
\end{lemma}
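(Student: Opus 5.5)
We prove Lemma~\ref{lemmalowerbound0} by induction on $X(\dd)=2g(\dd)-2+n$, using the DVV relation in the form~\eqref{CVirasoro} together with the string and dilaton equations~\eqref{string}, \eqref{dilaton}. Since $C(\dd)$ is symmetric in its arguments, we may reorder $\dd$ freely. The normalizing factor in~\eqref{defCg} is strictly positive whenever $2g-3+n\ge0$, so positivity of $C(\dd)$ is equivalent to positivity of $\int_{\overline{\mathcal M}_{g,n}}\psi_1^{d_1}\cdots\psi_n^{d_n}$. It is convenient to run the induction on the unnormalized $U(\dd)$ of~\eqref{defUnormal}, using the DVV relation~\eqref{DVV}.

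The base cases are $X(\dd)=1$, namely $(g,n)=(0,3)$ with $\dd=(0,0,0)$ and $(g,n)=(1,1)$ with $\dd=(1)$. Here the intersection numbers are $1$ and $1/24$ respectively, both positive.

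For the inductive step, take $\dd$ with $X(\dd)\ge2$ and $g(\dd)\in\ZZ_{\ge0}$, and split into cases.

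\emph{Case 1: some $d_j\ge1$.} Put it in first position, $d_1\ge1$. Every term on the right-hand side of~\eqref{DVV} is a nonnegative coefficient times either a $U$ with strictly smaller $X$, or a product of two such $U$'s. By induction each such $U$ is either positive or zero; it is zero exactly when its genus is not a nonnegative integer or some entry is negative, which happens for instance when $d_j+d_1-1<0$. So the right-hand side is $\ge0$, and it remains to exhibit one strictly positive term.

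\begin{itemize}
\item If $n\ge2$ and $d_1\ge1$, the term $(2d_2+1)\,U(d_2+d_1-1,d_3,\dots,d_n)$ has genus $g$ and $n-1$ points. Its Euler characteristic $2g-2+(n-1)$ is positive unless $(g,n)=(0,3)$. In the case $(g,n)=(0,3)$ we have $\dd=(0,0,0)$, which contradicts $d_1\ge1$. Hence this term is positive by induction.
\item If $n=1$, then $\dd=(3g-2)$ with $g\ge2$, and $d_1-2=3g-4\ge0$. Take $a+b=d_1-2$ with $a,b$ chosen so that $(a,b)$ has genus $g-1$; for example $a=0$, $b=3g-4$, which is the genus-$(g-1)$ two-point class. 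This term is positive by induction since $2(g-1)-2+2>0$. Alternatively, one can simply use the explicit formula~\eqref{onepoint}.
\end{itemize}

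\emph{Case 2: all $d_j=0$.} Then $\sum_j(d_j-1)=-n$ forces $g=1-n/3$. This is a nonnegative integer only for $n=3$, $g=0$, which is the base case.

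The only mildly delicate point is verifying that the chosen positive term really has admissible $(g,n)$ with $2g-2+n>0$ and nonnegative entries. This is the bookkeeping carried out above. Rather than invoking DVV in full, one could equally argue via the string equation~\eqref{string} to strip zeros and the dilaton equation~\eqref{dilaton} to strip ones, reducing to primitive $\dd$ or to known base values. In either route, positivity of each individual DVV term is automatic, and nonvanishing of at least one term is the main check.
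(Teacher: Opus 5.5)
Your proof is correct and follows the same route as the paper: induction on $X(\dd)$ via the DVV recursion \eqref{CVirasoro} (equivalently \eqref{DVV}), with all terms nonnegative, starting from the base values $C(0^3)=1/3$ and $C(1)=1/6$. You also supply the bookkeeping the paper leaves implicit, namely exhibiting one strictly positive term: the $j=2$ linear term $(2d_2+1)\,U(d_2+d_1-1,d_3,\dots,d_n)$ when $n\ge2$ and $d_1\ge1$, and the $U(0,3g-4)$ term when $n=1$.
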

\begin{proof}
By using~\eqref{CVirasoro} and by recalling that $C(0^3)=1/3$, $C(1)=1/6$.
\end{proof}

The following lemma proves one side of Conjecture~\ref{conjnesting}.

\begin{lemma}\label{lemmalowerbound}
For $g,n\ge1$ and $\dd\in(\mathbb{Z}_{\ge1})^n$ satisfying $|\dd|=3g-3+n$, we have
\begin{align}\label{lowerbound}
C(\dd)\;\geq\; C(3g-2)\,.
\end{align}
\end{lemma}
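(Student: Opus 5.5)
The plan is to prove the stronger ``merging'' monotonicity
\begin{equation*}
C(d_1,\dots,d_n)\;\ge\;C(d_2,\dots,d_j+d_1-1,\dots,d_n)\qquad(d_i\ge1,\ n\ge2).
\end{equation*}
Merging two entries $d_1,d_j$ into $d_1+d_j-1$ keeps $g(\dd)$ fixed, lowers $n$ by one, and keeps all entries $\ge1$. Iterating it $n-1$ times therefore reaches the one-point number $C(3g-2)$, which gives~\eqref{lowerbound}.

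By the dilaton equation~\eqref{dilaton} we may delete all entries equal to~$1$ (and if every entry equals $1$, the degree condition forces $g=1$ and $\dd=(1)$, a trivial case). So we may assume $d_i\ge2$. We then argue by induction on $g$, and for fixed $g$ on $n$. To set this up, we apply the DVV relation~\eqref{CVirasoro} with $d_1$ chosen to be the smallest entry. The key observation is that the linear coefficients sum to one. Indeed, by~\eqref{defX},
\begin{equation*}
\sum_{j=2}^n\frac{2d_j+1}{3(X(\dd)-1)}+\frac{2(d_1-1)}{3(X(\dd)-1)}
=\frac{X(\dd)-\frac{2d_1+1}{3}+\frac{2d_1-2}{3}}{X(\dd)-1}=1,
\end{equation*}
while the quadratic terms are nonnegative by Lemma~\ref{lemmalowerbound0}. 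Thus $C(\dd)$ is at least a convex combination of two kinds of terms. The first kind are the merged numbers $C(d_2,\dots,d_j+d_1-1,\dots,d_n)$, which have the same genus, $n-1$ points, and entries $\ge1$; these are bounded below by $C(3g-2)$ by the induction on~$n$. The second kind are the genus-reduced numbers $C(a,b,d_2,\dots,d_n)$ with $a+b=d_1-2$.

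For the genus-reduced terms with $a,b\ge1$, induction on $g$ only gives the bound $C(3g-5)$. By~\eqref{onepoint},
\begin{equation*}
\frac{C(3g+1)}{C(3g-2)}=\frac{(6g+3)(6g+1)(6g-1)}{108\,g\,(g+1)(2g-1)}>1 ,
\end{equation*}
so $C(3g-5)<C(3g-2)$ and this bound is too weak. The deficit must be paid for by the quadratic terms. I would retain, among the quadratic terms, those with $I=\emptyset$ or $J=\emptyset$; these contain a one-point factor $C(a)$ or $C(b)$, which is known explicitly from~\eqref{onepoint}, times a genus-$(g-g_a)$ number that is again bounded below inductively by a one-point value. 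The goal is then to verify the numerical inequality
\begin{equation*}
\tfrac{2(d_1-1)}{3(X-1)}\bigl(C(3g-2)-C(3g-5)\bigr)\;\le\;\text{(retained quadratic terms)} ,
\end{equation*}
using only the one-point formula and Stirling-type estimates, with small $g$ checked by hand.

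The hardest step is the treatment of the genus-reduced terms in which $a=0$ or $b=0$; these necessarily occur, for instance $a=b=0$ when $d_1=2$. For these I would first apply the string equation~\eqref{string}. This rewrites $C(0,b,d_2,\dots)$ as a combination of numbers of the same genus with one fewer point, whose weights sum to $X/(X-\frac23)>1$. This surplus weight is helpful, but the resulting entries $b-1$ or $d_j-1$ may again equal $0$ or $1$. So the induction hypothesis must probably be strengthened to a lower bound valid for vectors containing a bounded number of zeros, for example $C(0^p,\dd)\ge C(3g-2)\cdot\prod(\text{string-type factors})$, in the spirit of Theorem~\ref{thm2}. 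Controlling this bookkeeping, together with the quadratic compensation above, is where I expect the real work to lie.
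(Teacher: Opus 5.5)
There is a genuine gap, and it sits at the step you call the goal. The inequality $\frac{2(d_1-1)}{3(X-1)}\bigl(C(3g-2)-C(3g-5)\bigr)\le(\text{retained quadratic terms})$ is false for small $n$. Among the quadratic terms with $I=\emptyset$ or $J=\emptyset$, only $a=1$ or $b=1$ (with $C(1)=\frac16$) contribute at order $X^{-2}$, for a total of about $\frac{C(3g-5)}{18(X-1)(X-2)}$. One-point factors of higher genus give only $O(X^{-4})$. Meanwhile \eqref{onepoint} gives $C(3g-2)-C(3g-5)=\frac{34g-35}{36g(g-1)(2g-3)}\,C(3g-5)$. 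Take $n=3$ and $\dd=(g,g,g)$. The deficit is then $\approx\frac{17}{108g^2}C(3g-5)$, while your retained terms give $\approx\frac{1}{72g^2}C(3g-5)$, more than ten times too small. The remaining quadratic terms are of lower order for this $\dd$, so keeping all of them does not help either.

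The missing idea is that the compensation comes from exactly the $a=0$ and $b=0$ terms you treat as a bookkeeping obstacle. Take $d_1=\min_j d_j\ge3$ with all $d_j\ge2$. One application of \eqref{string} plus the induction hypothesis gives $C(0,d_1-2,d_2,\dots,d_n)\ge\bigl(1+\frac{2}{3(X-2)}\bigr)C(3g-5)$; if $d_1=3$, first delete the entry $1$ by \eqref{dilaton}, which only enlarges the factor. The same bound holds for $b=0$. Hence the genus-reduced part is at least $\frac{2}{3(X-1)}\bigl(d_1-1+\frac{4}{3(X-2)}\bigr)C(3g-5)$. That is an extra $\frac{8\,C(3g-5)}{9(X-1)(X-2)}$, sixteen times your quadratic contribution, and it beats the deficit because $d_1-1\le\frac{3g-3}{n}$ and $n\ge3$. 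The quadratic terms can then simply be dropped.

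For $d_1=2$ there is no deficit at all. Two string steps give $C(0,0,d_2,\dots,d_n)\ge\bigl(1+\frac{1}{3(X-2)}\bigr)\bigl(1+\frac{2}{3(X-3)}\bigr)C(3g-5)\ge C(3g-2)$. No strengthened hypothesis for vectors with many zeros is needed: at most two zeros ever occur, and the string term that would lower an entry $0$ vanishes.

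You also need $n=2$ separately, which your sketch ignores. There the string surplus alone falls short: for $d_1\approx d_2$ it gives $\frac{16}{72g^2}$ against a deficit of $\frac{17}{72g^2}$, in units of $C(3g-5)$. This is why the paper imports $n=2$ from DGZZ. Your one-point quadratic terms supply exactly the missing $\frac{1}{72g^2}$, so combining both mechanisms may make $n=2$ self-contained, but only after a next-order check.

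Finally, drop the merging-monotonicity framing. The DVV step yields a convex combination of all merges, which you bound using the induction hypothesis $C(3g-2)$. So what you are actually proving by induction is the lemma itself, not the stronger statement.
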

\begin{proof}
For $n=1$, inequality~\eqref{lowerbound} is trivial. For $n=2$ inequality~\eqref{lowerbound} was obtained in~\cite{DGZZ20,DGZZ21}. 
Now assume $n\ge3$, and let us prove~\eqref{lowerbound} by induction on $X(\dd)$. For $X(\dd)\leq 5$, inequality~\eqref{lowerbound} can be checked directly from Table~\ref{tablenormalizednumbers}. For $\dd=(d_1,\dots,d_n)\in(\mathbb{Z}_{\ge1})^n$ with $X(\dd)\ge6$, because of the symmetry of~$C$-values and formula~\eqref{dilaton}, we can assume that $2\leq d_1\leq\dots\leq d_n$. For $d_1\ge3$, by using \eqref{onepoint}, \eqref{CVirasoro} and the inductive assumption we have
\begin{align}\label{Cdlowerboundproof}
&C(\dd)\;\ge\; \sum_{j=2}^n\frac{2d_j+1}{3(X(\dd)-1)}\,C(3g-2)+\frac{2(d_1-1+\frac{4}{3\,(X(\dd)-2)})}{3\,(X(\dd)-1)}\,   C(3g-5) \nn\\
&\=C(3g-2) \m\frac{2(d_1-1)}{3(X(\dd)-1)}\Bigl(C(3g-2)-C(3g-5)\Bigr)
+\frac{8\,C(3g-5)}{9\,(X(\dd)-1)\,(X(\dd)-2)}  \nn\\
&\=C(3g-2) \+ \frac{2\, C(3g-5)}{3(X(\dd)-1)} \, \Bigl(-\frac{(d_1-1)(34g-35)}{36g(2g-3)(g-1)}+\frac{4}{3(X(\dd)-2)}\Bigr)\,,
\end{align}
where in the inequality we have omitted the quadratic-in-$C$ term since they are nonnegative and have used the following estimate which is implied by equation~\eqref{string}
\begin{align}\label{C0dlowerbound}
C(0,d_1-2,d_2,\dots,d_n)\;\ge\;\Bigl(1+\frac{2}{3(X(\dd)-2)}\Bigr)\,  C(3g-5)\,.
\end{align}
The expression in the parenthesis of the right-hand side of~\eqref{Cdlowerboundproof} is nonnegative since
\begin{align}\label{est1}
-\frac{(d_1-1)(34g-35)}{36g(2g-3)(g-1)}+\frac{4}{3(X(\dd)-2)} \;\ge\; -\frac{(34g-35)}{12ng(2g-3)}+\frac{4}{3(2g-4+n)}
\nn\\ \;\ge\;\frac{1}{3n}\Bigl(-\frac{(34g-35)}{4g(2g-3)}+\frac{12}{(2g-4+3)}\Bigr) \;\ge\;0\,,
\end{align}
where we have used the conditions $d_1-1\leq \frac{3g-3}n$, $3\leq n\leq 3g-3$ and $g\ge3$ (since $X(\dd)=2g-2+n\ge6$). Inequalities~\eqref{Cdlowerboundproof} and \eqref{est1} imply that $C(\dd)\ge C(3g-2)$ when $d_1\ge3$.

For $d_1=2$, we have the estimate
\begin{align}\label{C00dlowerbound}
&C(0,0,d_2,\dots,d_n)\=\sum_{j=2}^n\frac{2d_j+1}{3(X(\dd)-2)} C(0,d_2,\dots,d_j-1,\dots,d_n) \\
&\qquad \ge\; \Bigl(1+\frac{1}{3(X(\dd)-2)}\Bigr) \, \Bigl(1+\frac{2}{3(X(\dd)-3)}\Bigr) \, C(3g-5) \;\ge\;  C(3g-2)\,.\nn
\end{align}
where for the equality we have used~\eqref{string}, for the first inequality we have used the inductive assumption and an analogue of~\eqref{C0dlowerbound}, and for the second inequality we have used the one point forumla~\eqref{onepoint} and the condition $n\le 3g-3$ and $g\ge3$. By using~\eqref{CVirasoro}, \eqref{C00dlowerbound}, we obtain that
\begin{align}
&C(d_1=2,\dots,d_n) \;\ge\; \sum_{j=2}^n\frac{2d_j+1}{3(X(\dd)-1)} C(d_2,\dots,d_j+1,\dots,d_n) \nn\\
&\qquad\qquad\qquad\qquad  \+ \frac{2}{3(X(\dd)-1)}\, C(0,0,d_2,\dots,d_n) \;\ge\; C(3g-2)\,.
\end{align} 
This completes the proof.
\end{proof}
\begin{cor}
For $g,n\ge1$ and $\dd\in(\mathbb{Z}_{\ge0})^n$ satisfying $|\dd|=3g-3+n$, we have
\begin{align}\label{lowerbound1}
	C(\dd)\;\geq\; C(3g-2)\, \prod_{j=1}^{p_0(\dd)}\Bigl(1+\frac{j+2-p_0(\dd)}{6g-6+3n-3j}\Bigr)\,,
\end{align}
where $p_0(\dd)$ denotes the multiplicity of~$0$ in~$\dd$.
\end{cor}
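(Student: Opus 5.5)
We argue by induction on $p_0(\dd)$, using only the string equation~\eqref{string}, the dilaton equation~\eqref{dilaton}, Lemma~\ref{lemmalowerbound} and the positivity Lemma~\ref{lemmalowerbound0}. First reduce to the case $p_1(\dd)=0$. By~\eqref{dilaton}, deleting all $1$'s leaves $C$ and $g$ unchanged, lowers $n$ by $p_1(\dd)$, and leaves $p_0$ unchanged. The right-hand side of~\eqref{lowerbound1} depends on $n$ through $6g-6+3n-3j$. Each factor with $j+2-p_0\ge 0$ (and $j\le p_0$) is $\ge1$ and decreasing in $n$. Hence the statement with the larger $n$ should follow from the one without $1$'s, provided the remaining factors, those with $j<p_0-2$, are controlled. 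I expect this monotonicity to need checking, or else the induction can be run directly with the $1$'s present; the $1$'s are harmless in~\eqref{string} because they enter as $d_j=1\mapsto 0$ terms. I will instead run the induction on $p_0$ directly and keep $n$ general.

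The base case $p_0=0$ is Lemma~\ref{lemmalowerbound}, with $1$'s allowed: there $\dd\in(\ZZ_{\ge1})^n$, and the empty product is $1$. For the inductive step, take $\dd=(0,\dd'')$ with $p_0(\dd)=p\ge1$, where $\dd''=(d_2,\dots,d_n)$ has $p-1$ zeros. Apply~\eqref{string}: $C(0,\dd'')=\sum_{j\ge2}\frac{2d_j+1}{3(X(\dd)-1)}C(\dd''-e_j)$. The terms with $d_j=0$ vanish. For $d_j\ge1$, the tuple $\dd''-e_j$ has genus $g$, length $n-1$, and $p-1$ zeros if $d_j\ge2$, or $p$ zeros if $d_j=1$. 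I would first treat the case without $1$'s (so after the dilaton reduction $d_j\ge2$ for every nonzero entry). Then every term has $p-1$ zeros, and the induction hypothesis gives
\[
C(\dd''-e_j)\;\ge\;C(3g-2)\prod_{i=1}^{p-1}\Bigl(1+\frac{i+3-p}{6g-6+3(n-1)-3i}\Bigr).
\]
The weights sum to $\sum_{d_j\ne0}(2d_j+1)/(3(X-1))$. Since $\sum_{j\ge2}(2d_j+1)=3X-3-(p-1)\cdot1$ (we have $3X=\sum(2d_j+1)$ and the first entry contributes $1$), this sum equals $1-\frac{p-1}{3(X-1)}$, with $X=2g-2+n$.

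It remains to verify the algebraic identity that this factor times the $(p-1)$-product equals the $p$-product. Reindex $i=j-1$ in the product: $\frac{i+3-p}{6g-9+3n-3i}=\frac{j+2-p}{6g-6+3n-3j}$. So the old product equals $\prod_{j=2}^{p}$ of the new factors. The missing $j=1$ factor is $1+\frac{3-p}{6g-9+3n}=1+\frac{3-p}{3(X-1)}$. The weight sum we actually have is $1+\frac{1-p}{3(X-1)}$, which is smaller by $\frac{2}{3(X-1)}$. So the naive bound falls short by exactly this amount, and this is the main obstacle.

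To close the gap I would not discard anything but refine the inductive hypothesis. The missing $2/(3(X-1))$ should be recovered from the $(2d_j+1)$ weights, where the induction was lossy. Alternatively, the factors $(1+\frac{j+2-p}{\cdots})$ with the ``$+2$'' are presumably designed for the specific order of removal. I would first try removing zeros one at a time and applying the string equation in a telescoped form, checking $p=1$ by hand: $C(0,\dd'')\ge(1+\frac{2}{3(X-1)})C(3g-2)$ should follow from the sharper fact $C(\dd''-e_j)\ge C(3g-2)$ together with $\sum_{j\ge2}(2d_j+1)=3X-1$. Indeed, for $p=1$ the weight sum is $\frac{3X-1}{3(X-1)}=1+\frac{2}{3(X-1)}$. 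This shows my count above was off: the correct sum is $3X-1-2(p-1)$ minus the contributions of the zeros, giving $1+\frac{2-(p-1)}{3(X-1)}=1+\frac{3-p}{3(X-1)}$, which matches exactly. So after recounting, the induction closes by the identity, and the only remaining care point is the $d_j=1$ terms. For those I would use the inductive hypothesis at level $p$ with smaller $n$ (a double induction on $(p,n)$), using that each factor is $\ge$ its counterpart, or else avoid them via the dilaton reduction and a monotonicity check in $n$.
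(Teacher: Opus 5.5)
\textbf{There is a genuine gap: entries equal to $1$ cannot be handled this way, and in fact the inequality is false once $p_0\ge 6$.}

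\textbf{What is right.} Your base case is correct. After your recount, the string-equation bookkeeping is also correct. Since $6g-6+3n=3X(\dd)$, the claim reads $C(\dd)\ge C(3g-2)\,P_{p_0(\dd)}(X(\dd))$ with $P_p(Y):=\prod_{j=1}^{p}\bigl(1+\frac{j+2-p}{3(Y-j)}\bigr)$. The nonzero entries give total weight $\frac{3X-p}{3(X-1)}=1+\frac{3-p}{3(X-1)}$, and $\bigl(1+\frac{3-p}{3(Y-1)}\bigr)P_{p-1}(Y-1)=P_p(Y)$.

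\textbf{Where it fails.} You cannot work in the one-free class, because \eqref{string} turns $d_j=2$ into a $1$, so that class is not closed under the recursion. Each of your two remedies needs a comparison that is false.
The double induction on $(p,n)$ needs $P_p(Y)\ge P_{p-1}(Y)$ for the $d_j=1$ terms.
The dilaton route deletes the $1$'s, which replaces $X$ by $X-p_1$, so it needs $Y\mapsto P_p(Y)$ to be nonincreasing.
The numerators $j+2-p$ sum to $p(5-p)/2$, so $P_p(Y)=1+\frac{p(5-p)}{6Y}+O(Y^{-2})$ for fixed $p$. Hence $P_p(Y)-P_{p-1}(Y)=\frac{6-2p}{6Y}+O(Y^{-2})$, and the first comparison fails for $p\ge4$ and large $Y$. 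For $p\ge6$, $P_p(Y)$ lies below $1$ for large $Y$ and tends to $1$, so it is eventually increasing and the monotonicity fails too.

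\textbf{The statement is false.} Take $g=2$ and $\dd=(0^6,1^N,10)$. By \eqref{dilaton} and the Remark after Conjecture~\ref{conjnesting}, $C(\dd)=C(0^6,10)\approx0.24062$ for every $N$. The right-hand side is $C(4)\,P_6(N+9)\to C(4)=\frac{35}{144}\approx0.24306$; already for $N=200$ it is about $0.2419$.

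The $1$'s are not the only culprit. For $g=2$, \eqref{string} and \eqref{dilaton} give $C(0^a,2^{a+3})=\frac{5(a+3)}{6(a+2)}\,C(0^{a-1},2^{a+2})$. Hence $C(0^{16},2^{19})=\frac{175}{648}\bigl(\frac56\bigr)^{16}\frac{19}{3}\approx0.0925$, which is below the claimed bound $\frac{35}{144}\prod_{j=1}^{16}\frac{97-2j}{111-3j}\approx0.0933$.

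\textbf{What your method does prove} is the case $p_0(\dd)\le5$.
First delete the $1$'s and apply \eqref{string} once.
Then use the level-$(p-1)$ statement, with $1$'s allowed, to get $C(\dd)\ge C(3g-2)P_p(X-p_1)$.
Finish with the directly checkable fact that $P_p$ is nonincreasing for $p\le5$.

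\textbf{Comparison with the paper.} The paper's proof takes the same one-line route via \eqref{string} and Lemma~\ref{lemmalowerbound}. The argument in the proof of Theorem~\ref{thm2} that it points to controls the product only up to an additive $O(Y^{-1})$, with $Y$ dropping by $1$ or $2$ per step. So it does not give the exact inequality either, and the corollary must be restricted (for instance to $p_0(\dd)\le5$) or its product changed.
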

\begin{proof}
By~\eqref{string} and Lemma~\ref{lemmalowerbound} (some details can be found in
the proof of Theorem~\ref{thm2} below).
\end{proof}

We mention that several other lower bounds for Witten's intersection numbers are obtained in~\cite[Proposition~4.1]{Agg}, \cite[Theorem~5]{DGZZ20} and \cite[Corollary~5.4]{LX07}.

\subsection{Upper Bound}
Analogous to~\cite[Lemma~3.1]{Agg}, we first prove the following lemma. 
\begin{lemma}\label{estimatethird}
For $n\ge1$, $g\ge2$ and for $\mathbf{d}\in(\mathbb{Z}_{\ge2})^n$ satisfying $|\dd|=3g-3+n$, we have
\begin{align}\label{upperboundthirdterm}
\sum_{\substack{a,b\ge0\\a+b=d_1-2}}\sum_{\substack{I\sqcup J=\{2,\dots,n\}\\ X(a,\dd_I),X(b,\dd_J)\in\ZZ}}\frac{\bigl(X(a,\mathbf{d}_{I})-1\bigr)!\,\bigl(X(b,\mathbf{d}_{J})-1\bigr)!}{6\,\bigl(X(\mathbf{d})-1\bigr)!}\;\leq\; \frac{2}{(X(\mathbf{d})-1)(X(\mathbf{d})-2)}\,.
\end{align}
\end{lemma}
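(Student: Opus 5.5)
The plan is to reduce \eqref{upperboundthirdterm} to an elementary estimate on sums of products of factorials, in the spirit of \cite[Lemma~3.1]{Agg}. Write $X:=X(\dd)$, and for a splitting set $x:=X(a,\dd_I)$ and $y:=X(b,\dd_J)$. Since $a+b=d_1-2$, additivity of $X$ gives
\begin{align*}
x+y \= \tfrac13(2a+1)+\tfrac13(2b+1)+\sum_{j\ge2}\tfrac13(2d_j+1) \= X-1\,.
\end{align*}
Each summand therefore equals $\frac{1}{6(X-1)}\binom{X-2}{x-1}^{-1}$, and the task is to bound the number of pairs $(a,I)$ producing a given value of $x$, and then to sum the reciprocal binomials.

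\emph{Step 1 (counting).} Fix $I$. As $a$ runs through $0,\dots,d_1-2$, the value $x=\frac{2a+1}{3}+X(\dd_I)$ increases in steps of $2/3$, so it is an integer for at most every third value of $a$. Hence each admissible $I$ contributes at most about $(d_1-1)/3+1$ terms, and these have distinct values of $x$. Since all $d_j\ge2$, each $j\in I$ raises $X(\dd_I)$ by at least $5/3$. I will therefore organize the sum by $|I|=m$. For fixed $m$, the value $x$ satisfies $x\ge \frac{5m+1}{3}$ and $y\ge\frac{5(n-1-m)+1}{3}$, so both $x$ and $y$ grow linearly in the corresponding sizes.

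\emph{Step 2 (summation).} The extreme terms are $I=\emptyset$ or $J=\emptyset$ with $a$ or $b$ small, i.e.\ $x=1$ or $y=1$. These two terms each contribute $\frac{1}{6(X-1)(X-2)}$, together $\frac{1}{3(X-1)(X-2)}$. The next terms, with $x=2$, are of size $\frac{1}{6(X-1)(X-2)(X-3)}\cdot 1!$, and there are only $O(1)$ of them. The key point is that the number of sets $I$ with $|I|=m$ is $\binom{n-1}{m}$, while the reciprocal binomial is at most
\begin{align*}
\binom{X-2}{x-1}^{-1} \;\le\; \binom{X-2}{\lceil 5m/3\rceil}^{-1}\,.
\end{align*}
Because $X-2\ge \frac{5(n-1)}{3}+\dots$, I expect $\binom{n-1}{m}\binom{X-2}{x-1}^{-1}$ to decay geometrically in $m$, e.g.\ with ratio $\le 3/5$ or so, using $X\ge \frac{5n}{3}$. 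I would also bound the factor $d_1/3$ from Step 1 by using $d_1\le X$ in the $m=0$ and $m=n-1$ strata only, where the binomial is smallest in relative terms. Summing the geometric series then yields a constant times $\frac{1}{(X-1)(X-2)}$, and the constant $2$ leaves considerable room after the factor $1/6$.

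\emph{Main obstacle.} The difficulty is to make the constant explicit (at most $12$ before dividing by $6$) uniformly in $n$ and $g$. The dangerous regime is when $d_1$ is large and $n$ is small: then many values of $a$ occur with $I=\emptyset$, giving a sum $\sum_{x}\binom{X-2}{x-1}^{-1}$ over an arithmetic progression of step $1$ (after the mod-$3$ selection). For this I would use the classical bound
\begin{align*}
\sum_{k=0}^{N}\binom{N}{k}^{-1} \;\le\; 2+\frac{4}{N}\,, \qquad N\ge4\,,
\end{align*}
which handles it directly. The small cases $g=2,3$ (with $X\le 7$ or so) I would check by hand, or using Table~\ref{tablenormalizednumbers}-sized enumeration.
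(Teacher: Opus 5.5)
There is a genuine gap, and it starts with the basic identity. Since $x+y=X-1$, you have $(x-1)+(y-1)=X-3$, hence
\[
\frac{(x-1)!\,(y-1)!}{6\,(X-1)!}=\frac{1}{6\,(X-1)(X-2)}\binom{X-3}{x-1}^{-1}.
\]
Your expression $\frac{1}{6(X-1)}\binom{X-2}{x-1}^{-1}$ instead equals $\frac{(x-1)!\,y!}{6(X-1)!}$, so it is too large by the factor $y$; it is not even symmetric under $x\leftrightarrow y$. Used literally, it gives $\frac{1}{6(X-1)}$ for the $x=1$ term, which contradicts the value $\frac{1}{6(X-1)(X-2)}$ you state next. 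Your treatment of the regime $I=\emptyset$ via $\sum_k\binom{N}{k}^{-1}\le 2+\frac4N$ would then only give $O(1/X)$ instead of $O(1/X^2)$.

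Step~2 also rests on claims that are false or unproved.
\begin{itemize}
\item The terms with $x=2$ are not $O(1)$ in number. Each $j\ge2$ with $d_j=2$ gives one (take $I=\{j\}$, $a=0$), and symmetrically for $y=2$. For $\dd=(2^{3g-3})$ there are no $x=1$ or $y=1$ terms at all, and for large $g$ these $2(n-1)$ terms give the leading contribution, about $\frac{1}{5(X-1)(X-2)}$.
\item The bound $\binom{X-2}{x-1}^{-1}\le\binom{X-2}{\lceil 5m/3\rceil}^{-1}$ fails for $m=1$, $x=2$. It also ignores that the binomial is equally small when $y$ is small.
\item The geometric decay is only ``expected''.
\item For $0<m<n-1$ you do not say how the up to $(d_1-1)/3+1$ values of $a$ per set $I$ are absorbed. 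A count-times-maximum bound loses a factor of order $d_1$. Together with $n\asymp X$ (e.g.\ $\dd=(d_1,2^{n-1})$ with $d_1\asymp X$), this again only gives $O(1/X)$.
\end{itemize}

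The idea that removes all of this, and which the paper uses, is to index the terms by $n_1=|I|$ and $g_1=g(a,\dd_I)$. Then $x-1=2g_1+n_1-2$ and $y-1=2g_2+n_2-2$, with $g_1+g_2=g$ and $g_1,g_2\ge1$. For fixed $I$ and $g_1$ the value of $a$ is forced, so the left-hand side is at most
\[
\sum_{n_1+n_2=n-1}\ \sum_{g_1+g_2=g}\binom{n-1}{n_1}\frac{(2g_1+n_1-2)!\,(2g_2+n_2-2)!}{6\,(X-1)!}\,,
\]
and no counting of the $a$'s is needed. This is your ``distinct values of $x$'' remark, read the other way. The terms with $g_1=1$ or $g_2=1$ are at most $\frac{1}{3(X-1)(X-2)}\sum_{n_1}\prod_{j=1}^{n_1}\frac{n-j}{X-2-j}\le\frac{3}{2(X-1)(X-2)}$, using $n\le\frac{3X}{5}$ and $n\le X-2$. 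The terms with $g_1,g_2\ge2$ are handled by $\binom{n-1}{n_1}\binom{2g-4}{2g_1-2}\le\binom{X-3}{2g_1+n_1-2}$. For $g\ge4$ this gives at most $\frac{n(g-3)}{6\binom{2g-4}{2}(X-1)(X-2)}\le\frac{3}{8(X-1)(X-2)}$, and for $g\le3$ these terms are absent. With the corrected summand formula, your outline can be completed along these lines.
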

\begin{proof}
For $n=1$, the inequality~\eqref{upperboundthirdterm} is trivial. Assume that $n\ge2$. For each $a,b,I,J$ satisfying $a+b=d_1-2$, $I\sqcup J=\{2,\dots,n\}$ and $X(a,\dd_I),X(b,\dd_J)\in\ZZ$, we set $n_1=|I|$, $g_1=g(a,\dd_I)=(a+|\dd_I|+2-n_1)/3\in\ZZ$, $n_2=|J|$, $g_2=g(a,\dd_J)=(b+|\dd_J|+2-n_2)/3\in\ZZ$. By counting the number of 4-tuples $(a,b,I,J)$ with given values of $n_i$ and $g_i$, we find
\begin{align}
&\sum_{\substack{a,b\ge0\\a+b=d_1-2}}\sum_{\substack{I\sqcup J=\{2,\dots,n\}\\ X(a,\dd_I),X(b,\dd_J)\in\ZZ}}\frac{\bigl(X(a,\mathbf{d}_{I})-1\bigr)!\,\bigl(X(b,\mathbf{d}_{J})-1\bigr)!}{6\,\bigl(X(\mathbf{d})-1\bigr)!} \nn\\
\;\leq\; &\sum_{n_1+n_2=n-1}\sum_{\substack{g_1,g_2\ge1\\g_1+g_2=g }}\binom{n-1}{n_1}\frac{(2g_1+n_1-2)!\,(2g_1+n_2-2)!}{6\,(X(\mathbf{d})-1)!}\nn\\
\= &\sum_{n_1+n_2=n-1}\binom{n-1}{n_1}\biggl(\frac{n_1!\,(X(\dd)-n_1-3)!}{3\,(X(\mathbf{d})-1)!} \,+\,  \sum_{\substack{g_1,g_2\ge2\\g_1+g_2=g }}\frac{(2g_1+n_1-2)!\,(2g_1+n_2-2)!}{6\,(X(\mathbf{d})-1)!}\biggr)\,.
\label{estimatethirdterm}
\end{align}
We estimate the two terms on the right-hand side of~\eqref{estimatethirdterm} separately. For the first term we have 
\begin{align}\label{estimatethirdterm2}
&\sum_{n_1+n_2=n-1}\binom{n-1}{n_1}\frac{n_1!\,(X(\mathbf{d})-n_1-3)!}{3\,(X(\mathbf{d})-1)!}
\=\frac{1}{3\,(X(\mathbf{d})-1)\,(X(\mathbf{d})-2)}\sum_{n_1=0}^{n-1}\prod_{j=1}^{n_1}\frac{n-j}{X(\mathbf{d})-2-j}\nn \\
&\qquad\qquad \;\leq\; \frac{1}{3\,(X(\mathbf{d})-1)\,(X(\mathbf{d})-2)}\biggl(2+\sum_{n_1=2}^{\infty}\Bigl(\frac{3}{5}\Bigr)^{n_1-2}\biggr)\=\frac{3}{2\,(X(\mathbf{d})-1)\,(X(\mathbf{d})-2)}\,, 
\end{align}
where in the inequality we have used the fact that $n\leq \frac{3X(\mathbf{d})}{5}$ (implied by  $\dd\in(\mathbb{Z}_{\ge2})^n$) and $n\leq X(\dd)-2$ (implied by $g\ge2$).
For the second term, we note that it vanishes if $g\le 3$, otherwise it has the upper bound
\begin{align}
&\sum_{n_1+n_2=n-1}\sum_{\substack{g_1,g_2\ge2\\g_1+g_2=g }}\binom{n-1}{n_1}\frac{(2g_1+n_1-2)!\,(2g_2+n_2-2)!}{6\,(X(\mathbf{d})-1)!}\nn\\
\leq\; &\sum_{\substack{g_1,g_2\ge2\\g_1+g_2=g \\n_1+n_2=n-1 }} \frac{\binom{2g-4}{2g_1-2}^{-1}}{6\,(X(\mathbf{d})-1)(X(\mathbf{d})-2)} \;\leq\;  \frac{n\, (g-3) \,\binom{2g-4}{2}^{-1}}{6\,(X(\mathbf{d})-1)(X(\mathbf{d})-2)} \nn\\
\leq\; & \frac{3}{8\,(X(\mathbf{d})-1)\,(X(\mathbf{d})-2)}\,,\label{estimatethirdterm1}
\end{align}
where for the first inequality we used the fact that $X(\dd)-3=(2g_1+n_1-2)+(2g_2+n_2-2)$ and that
	\begin{align}
		\binom{a_1}{b_1}\binom{a_2}{b_2}\;\leq\; \binom{a_1+a_2}{b_1+b_2}\,,
	\end{align}
	and for the last inequality we used $g\ge4$ and $n\leq 3g-3$. 
	Combining \eqref{estimatethirdterm},~\eqref{estimatethirdterm2},~\eqref{estimatethirdterm1}, we obtain the lemma.
\end{proof}

Following again the idea of Aggarwal~\cite{Agg} (cf.~\cite{GNYZ}), we now define 
\begin{align}\label{defthetagn}
\theta_{\X, \thin n}\=
\max_{\substack{\dd\in(\ZZ_{\ge1})^n \\ X(\dd)=\X}}C(\dd)\,, \quad \X,n\ge1\,.
\end{align}
According to~\eqref{dilaton} we have 
\begin{align}\label{thetaineq0}
\theta_{\X, \thin n} \geq\theta_{\X-1, \thin n-1}\,, \qquad \forall\, \X,n\ge2\,.
\end{align}
We have the following 
\begin{lemma}\label{lemmaC0C00}
For $n\ge1$, $\dd\in(\mathbb{Z}_{\ge1})^n$ we have
\begin{align}
&C(0,\dd)\;\leq\;
\frac{3X(0,\dd)-3p_1(\dd)-1}{3X(0,\dd)-3p_1(\dd)-3}\, \theta_{X(0,\dd)-p_1(\dd)-1,n-p_1(\dd)}\,,\label{C0dest}\\ &C(0,0,\dd)\;\leq\;\frac{(3X(0^2,\dd)-3p_1(\dd)-2)(3X(0^2,\dd)-3p_1(\dd)-7)}
{(3X(0^2,\dd)-3p_1(\dd)-3)(3X(0^2,\dd)-3p_1(\dd)-9)}\, \theta_{X(0^2,\dd)-p_1(\dd)-2,n-p_1(\dd)}\,. \label{C00dest}
\end{align}
\end{lemma}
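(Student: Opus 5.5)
The plan is to combine the string equation~\eqref{string} with the dilaton equation~\eqref{dilaton}, and then bound each term on the right-hand side by the appropriate $\theta$. The first step is to strip off all the $1$'s. By~\eqref{dilaton} (which holds for any arguments, including zeros) we have $C(0,\dd)=C(0,\tilde\dd)$ and $C(0,0,\dd)=C(0,0,\tilde\dd)$. Here $\tilde\dd\in(\ZZ_{\ge2})^{m}$, with $m=n-p_1(\dd)$, is obtained from $\dd$ by deleting its $1$'s. Each deleted $1$ lowers $X$ by exactly $1$, so $X(0,\tilde\dd)=X(0,\dd)-p_1(\dd)$, and similarly with two zeros. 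It therefore suffices to prove both inequalities for $\tilde\dd$ with $p_1=0$.

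For~\eqref{C0dest}, put $Y=X(0,\tilde\dd)$ and apply~\eqref{string}:
\[
C(0,\tilde\dd)=\sum_{j=1}^m\frac{2\tilde d_j+1}{3(Y-1)}\,C(\tilde d_1,\dots,\tilde d_j-1,\dots,\tilde d_m).
\]
Every argument on the right is $\ge1$. Each such tuple has $X=X(\tilde\dd)-\tfrac23=Y-1$ and length $m$, so each $C$-value is at most $\theta_{Y-1,m}$. Moreover $\sum_j(2\tilde d_j+1)=3X(\tilde\dd)=3Y-1$. Together these give $C(0,\tilde\dd)\le\frac{3Y-1}{3Y-3}\,\theta_{Y-1,m}$, which is~\eqref{C0dest}.

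For~\eqref{C00dest}, put $Y=X(0,0,\tilde\dd)$ and apply~\eqref{string}, removing one of the two zeros. The term coming from lowering the other zero vanishes, because it has a negative argument. The remaining terms are $\frac{2\tilde d_j+1}{3(Y-1)}\,C(0,\tilde\dd-e_j)$, and their coefficients again sum to $\frac{3Y-2}{3(Y-1)}$. Each $C(0,\tilde\dd-e_j)$ is bounded by~\eqref{C0dest}, applied to a tuple with $X$-value $Y-1$ and with $p_1\in\{0,1\}$; we get $p_1=1$ exactly when $\tilde d_j=2$. If $p_1=0$ the bound is $\frac{3Y-4}{3Y-6}\theta_{Y-2,m}$. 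If $p_1=1$ it is $\frac{3Y-7}{3Y-9}\theta_{Y-3,m-1}$, and by the monotonicity~\eqref{thetaineq0} this is $\le\frac{3Y-7}{3Y-9}\theta_{Y-2,m}$. Since $\frac{3Y-4}{3Y-6}\le\frac{3Y-7}{3Y-9}$, both cases are bounded by $\frac{3Y-7}{3Y-9}\theta_{Y-2,m}$. Summing over $j$ gives~\eqref{C00dest}.

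The only delicate point is bookkeeping in degenerate situations. One must check that the tuples fed into $\theta$ lie in its range of definition, i.e.\ that they are nonempty with $X\ge1$. One must also handle small $Y$, where the denominators $3Y-3$ or $3Y-9$ could vanish or become negative, or where $g$ of an intermediate tuple is not an integer; in that case the corresponding $C$ is $0$ and the term is harmless. These low cases can be handled directly, or they are excluded by the stability condition $2g-2+n>0$.
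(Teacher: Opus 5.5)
Your proposal is correct and is essentially the paper's own proof. You reduce to $p_1=0$ via~\eqref{dilaton}, apply~\eqref{string} and bound each term by $\theta$; in the two-zero case you feed~\eqref{C0dest} back in and merge the $p_1\in\{0,1\}$ branches using~\eqref{thetaineq0} and the comparison $\frac{3Y-4}{3Y-6}\le\frac{3Y-7}{3Y-9}$. You are also right to flag the degenerate small-$Y$ cases, for instance $Y=3$ where $3Y-9=0$; the paper leaves these implicit as well.
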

\begin{proof}
Let us first prove~\eqref{C0dest}. Since both sides of~\eqref{C0dest} are invariant if any argument~$1$ in $\dd$ is removed, it is sufficient to prove~\eqref{C0dest} for the case when $\dd\in(\ZZ_{\ge2})^n$. By using~\eqref{string}, \eqref{dilaton} and \eqref{thetaineq0} we have
\begin{align}
&C(0,\dd) \= \sum_{j=1}^{n} \frac{2d_j+1}{3X(0,\dd)-3} C(d_1,\dots,d_j-1,\dots,d_n)\;\leq\; \frac{3X(0,\dd)-1}{3X(0,\dd)-3} \, \theta_{X(0,\dd)-1,n}\,,
\end{align}
which proves inequality~\eqref{C0dest}.
To prove~\eqref{C00dest}, for the same reason we can assume $\dd\in(\ZZ_{\ge2})^n$. By using~\eqref{string}, \eqref{thetaineq0} and \eqref{C0dest} we have
\begin{align}
&C(0^2,\dd) \= \sum_{j=1}^{n} \frac{2d_j+1}{3X(0^2,\dd)-3} C(0,d_1,\dots,d_j-1,\dots,d_n)\nn\\
&\;\leq\; \frac{3X(0^2,\dd)-2}{3X(0^2,\dd)-3} \, \max\biggl\{\frac{3X(0^2,\dd)-4}{3X(0^2,\dd)-6}\,\theta_{X(0^2,\dd)-2,n},
\frac{3X(0^2,\dd)-7}{3X(0^2,\dd)-9}\,\theta_{X(0^2,\dd)-3,n-1}\biggr\}\nn\\
&\;\leq\; \frac{3X(0^2,\dd)-2}{3X(0^2,\dd)-3} \,  
\frac{3X(0^2,\dd)-7}{3X(0^2,\dd)-9}\,\theta_{X(0^2,\dd)-2,n}\,,
\end{align}
which finishes the proof of~\eqref{C00dest}.
\end{proof}

Following~\cite{Agg} (see also~\cite{GNYZ}), introduce a number-theoretic function $f(\X,n)$, defined through the recursion
\begin{align}\label{defg*}
f(\X,n)&\=
\frac{2}{3}\,f(\X-1,n-1)\+\frac{1}{3}\,f(\X-1,n+1) \+\frac{4}{(\X-1)(\X-2)}\,,
\end{align} 
for any $n\ge3$, $\X\ge8$ together with the initial data $f(\X,n)=1/\pi$
for $1\leq \X\leq 7$ or $n=1,2$. We notice that the function $f(\X,n)$ is the same as
the one introduced in~\cite{GNYZ}. The following proposition is proved in~\cite{GNYZ}.
\begin{lemma}[\cite{GNYZ}]\label{lemmagXn}
The function $f(\X,n)$ satisfies the following properties.
\begin{enumerate}
\item $1/\pi\leq f(\X,n)\leq 1$ for any $n\ge1$, $X\ge1$.
\item $f(\X,n)$ is monotone increasing with respect to~$n$.
\item For $1\leq n\leq \X/5$, as $\X\to\infty$, $f(\X,n)=1/\pi+O(1/\X)$ where the $O$-constant is uniform in~$n$.
\end{enumerate}
\end{lemma}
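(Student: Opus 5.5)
The plan is to read the recursion~\eqref{defg*} as an expectation over a random walk; all three properties then follow from elementary facts about that walk. Consider the Markov chain $(\X_t,N_t)_{t\ge0}$ with $(\X_0,N_0)=(\X,n)$. At each step $\X_{t+1}=\X_t-1$, while $N_{t+1}=N_t-1$ with probability $2/3$ and $N_{t+1}=N_t+1$ with probability $1/3$, independently of the past. Let $\tau$ be the first time $t$ with $\X_t\le7$ or $N_t\le2$; note $\tau\le \X$. Unwinding~\eqref{defg*} along the chain (an induction on $\X$) gives
\begin{align*}
f(\X,n)\=\frac1\pi\+\mathbb{E}\biggl[\,\sum_{t=0}^{\tau-1}\frac{4}{(\X_t-1)(\X_t-2)}\biggr]\,,
\end{align*}
where $\X_t=\X-t$ is deterministic and only $\tau$ is random.

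\emph{Property (1).} All summands are nonnegative, so $f(\X,n)\ge1/\pi$. For the upper bound, the sum is at most $\sum_{m\ge8}\frac{4}{(m-1)(m-2)}=\frac46=\frac23$ by telescoping. Hence $f(\X,n)\le\frac1\pi+\frac23<1$.

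\emph{Property (2).} Couple the chains started at $(\X,n)$ and $(\X,n+1)$ by using the same $\pm1$ increments. Then $N'_t=N_t+1$ for all $t$, so the second chain reaches $\{N\le2\}$ no earlier than the first. The $\X$-stopping time is the same for both, so $\tau'\ge\tau$ pointwise. Since the summands depend only on $t$ and are nonnegative, $f(\X,n+1)\ge f(\X,n)$. One small point needs checking: the boundary values $n=1,2$ are absorbing with value $1/\pi$, which is consistent with the formula because there $\tau=0$.

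\emph{Property (3).} This is the main step. Write $f(\X,n)-\frac1\pi=\mathbb{E}[S_\tau]$ with $S_\tau=\sum_{t<\tau}\frac{4}{(\X-t-1)(\X-t-2)}$, and split on whether $\tau\le t_0:=\lfloor 4\X/5\rfloor$.
\begin{itemize}
\item On $\{\tau\le t_0\}$, telescoping gives $S_\tau\le \frac{4}{\X-t_0-2}\le \frac{4}{\X/5-2}=O(1/\X)$.
\item On $\{\tau>t_0\}$, we have $S_\tau\le 2/3$. Moreover this event forces $N_{t_0}>2$, while $\mathbb{E}N_{t_0}=n-t_0/3\le \X/5-4\X/15+O(1)=-\X/15+O(1)$. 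Hoeffding's inequality then gives $\mathbb{P}(\tau>t_0)\le e^{-c\X}$ for an absolute $c>0$.
\end{itemize}
Combining the two cases, $0\le f(\X,n)-\frac1\pi\le \frac{4}{\X/5-2}+\frac23e^{-c\X}=O(1/\X)$, uniformly in $1\le n\le \X/5$.

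The only delicate point is this large-deviation estimate. The constraint $n\le\X/5$ is exactly what makes the drift $-1/3$ carry the walk below level $2$ well before $\X_t$ becomes small. Beyond that, I only need to make sure the stopping at $\X\le7$ and the integrality of $t_0$ introduce nothing more than $O(1)$ shifts.
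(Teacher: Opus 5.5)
Your proof is correct and complete as written. The paper itself gives no argument for this lemma. It only observes that $f(\X,n)$ is the function of \cite{GNYZ} and imports the three properties from there, so your random-walk proof is a self-contained substitute for that citation rather than a variant of a proof in the text.

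The first-step representation $f(\X,n)=\frac1\pi+\mathbb{E}\bigl[\sum_{t<\tau}\frac{4}{(\X-t-1)(\X-t-2)}\bigr]$ is exactly what \eqref{defg*} encodes. Since $\tau\le\max(\X-7,0)$, the induction on $\X$ is harmless, and your three steps go through as you describe. For (3), reading $\{\tau>t_0\}\subseteq\{N_{t_0}\ge3\}$ for the unstopped walk, you get for $\X\ge8$ and $n\le\X/5$ the bound $0\le f(\X,n)-\frac1\pi\le\frac{20}{\X-5}+\frac23e^{-\X/360}$.

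What your route adds beyond the citation:
\begin{itemize}
\item \emph{Explicit constants.} The estimates above are fully quantitative.
\item \emph{The role of the drift.} It becomes transparent that only the drift $-\frac13$ matters. The same estimate gives (3) for $n\le(\frac13-\delta)\X$, with a $\delta$-dependent constant. For $n\ge\X/3$ and large $\X$, the walk reaches $\X_t=7$ before $N_t=2$ with probability bounded below, so $f-\frac1\pi$ no longer tends to $0$; thus $\X/3$ is the natural threshold.
\item \emph{Diagonal monotonicity.} Combining your property (2) with one step of \eqref{defg*} gives $f(\X-1,n-1)\le f(\X,n)$. This is not among the three stated properties. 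The paper nonetheless uses it without comment in the proof of Lemma~\ref{lemineqthetaXn}, in the form $f(\X-2,n)\le f(\X-1,n+1)$ and $f(\X-3,n-1)\le f(\X-1,n+1)$, so your framework supplies it directly.
\end{itemize}
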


The significance of the function $f(\X,n)$ is given by the following important lemma.
\begin{lemma}\label{lemineqthetaXn}
For $n\ge1$, $\X\ge1$, the numbers $\theta_{\X,n}$ have the upper bound
\begin{align}\label{ineqthetaXn}
\theta_{\X,n}\;\leq\; f(\X,n)\,.
\end{align}
\end{lemma}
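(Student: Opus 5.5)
We prove \eqref{ineqthetaXn} by induction on $\X$, mirroring~\cite[Lemma~3.2]{Agg} and its analogue in~\cite{GNYZ}.

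\emph{Base cases.} For $1\le\X\le7$ or $n=1,2$ we have $f(\X,n)=1/\pi$, so we must check $C(\dd)\le 1/\pi$ there. For $\X\le7$ this is a finite check, using Table~\ref{tablenormalizednumbers} and the dilaton equation~\eqref{dilaton}; all listed values are below $0.29<1/\pi$. For $n=1$, formula~\eqref{onepoint} together with Stirling's expansion~\eqref{asymsmallest} gives $C(3g-2)<1/\pi$ for all $g$. Monotonicity in $g$ can be verified from the ratio $C(3g+1)/C(3g-2)$, which is an explicit rational function of $g$. For $n=2$ we use Zograf's closed formula~\eqref{twopointZograf}, or the two-point bound of~\cite{DGZZ20}, to obtain $C(d_1,d_2)\le 1/\pi$. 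This is the step where an external input is needed. Since $f$ is defined so that these base cases hold, I expect this to be a known (if tedious) verification.

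\emph{Inductive step.} Let $n\ge3$, $\X\ge8$, and $\dd\in(\ZZ_{\ge1})^n$ with $X(\dd)=\X$. If some $d_j=1$, then by~\eqref{dilaton} $C(\dd)\le\theta_{\X-1,n-1}\le f(\X-1,n-1)$. We then need $f(\X-1,n-1)\le f(\X,n)$; this follows from the recursion~\eqref{defg*} and the monotonicity in Lemma~\ref{lemmagXn}, by an auxiliary induction if necessary. So we may assume $\dd\in(\ZZ_{\ge2})^n$ with $d_1=\min d_j$, and apply~\eqref{CVirasoro}, treating its three terms as follows.
\begin{itemize}
\item \emph{First term.} Each $C(d_2,\dots,d_j+d_1-1,\dots,d_n)$ has $X=\X-1$ and length $n-1$, so it is bounded by $f(\X-1,n-1)$. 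The coefficients sum to $\frac{3\X-2d_1-1}{3(\X-1)}$.
\item \emph{Second term.} The middle terms $C(a,b,d_2,\dots)$ have $X=\X-1$ and length $n+1$. When $a,b\ge1$, they are bounded by $f(\X-1,n+1)$. When $a$ or $b$ is $0$, we use Lemma~\ref{lemmaC0C00}; the case $a=b=0$ uses~\eqref{C00dest}. The correction factors there are $1+O(1/\X)$, bounded by $f(\X-2,\cdot)$-type quantities. The coefficient total is $\frac{2(d_1-1)}{3(\X-1)}$.
\item \emph{Quadratic term.} It is bounded by $\frac{2}{(\X-1)(\X-2)}$ via Lemma~\ref{estimatethird}, using $C\le\theta\le f\le1$ from Lemma~\ref{lemmagXn}(1) and the inductive hypothesis.
\end{itemize}

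\emph{Combining.} Since $f$ is increasing in $n$, we have $f(\X-1,n+1)\ge f(\X-1,n-1)$. The bound is therefore a weighted average whose weight on $f(\X-1,n+1)$ is $\frac{2(d_1-1)}{3(\X-1)}\le\frac13$, because $d_1\le 3\X/(2n)$ roughly and $n\ge3$. Pushing weight onto the larger value then gives
\[
C(\dd)\le \tfrac23 f(\X-1,n-1)+\tfrac13 f(\X-1,n+1)+(\text{error}).
\]
The leftover error consists of the quadratic term $\frac{2}{(\X-1)(\X-2)}$ plus the excess from the factors $\frac{3X-1}{3X-3}$ in Lemma~\ref{lemmaC0C00}. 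There are only $O(1)$ such boundary summands, each with coefficient $\frac{2}{3(\X-1)}$, so this excess is $O(1/\X^2)$. The main obstacle is showing that these two contributions together stay below the slack $\frac{4}{(\X-1)(\X-2)}$ built into~\eqref{defg*}. Here I would track constants carefully and use the bound $1/\pi\le f\le1$, together with $\theta_{\X-2,n}$ versus $\theta_{\X-1,n+1}$ comparisons via~\eqref{thetaineq0}, so that the boundary terms can be absorbed with explicit constants valid for $\X\ge8$.
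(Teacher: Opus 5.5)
Your proposal is correct and follows the paper's proof essentially step for step. The paper uses the same base cases: a direct check for $\X\le7$, monotonicity of \eqref{onepoint} for $n=1$, and the two-point bound $C(d_1,d_2)\le C(0,3g-1)=\frac{6g-1}{6g-3}\,C(3g-2)$ of~\cite{DGZZ20} for $n=2$. Its inductive step has the same ingredients as yours: \eqref{CVirasoro}, Lemma~\ref{estimatethird} for the quadratic term, Lemma~\ref{lemmaC0C00} for the summands containing zeros, and the weight bound $\frac{2(d_1-1)}{3(\X-1)}\le\frac13$ combined with monotonicity of $f$ in $n$.

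The constant check you left open closes as in the paper. The zero-containing summands exceed $f(\X-1,n+1)$ by only about $\frac{1}{(\X-1)(\X-2)}$. Together with the quadratic term, the total error is roughly $\frac{26}{9(\X-1)(\X-2)}$ for $d_1\ge3$ and $\frac{79}{27(\X-1)(\X-2)}$ for $d_1=2$, both below the slack $\frac{4}{(\X-1)(\X-2)}$. Your uniform weighting also yields the correct combination $\frac23 f(\X-1,n-1)+\frac13 f(\X-1,n+1)$ in every case, including $d_1=2$ and $d_1=1$.
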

\begin{proof}
For $1\leq \X\leq 7$, we check individually that inequality~\eqref{ineqthetaXn} holds. For $n=1$, by~\eqref{onepoint} inequality~\eqref{ineqthetaXn} is equivalent to the inequality
\begin{align}\label{onepointbound}
\frac{3\, (6g-3)!!}{54^g\, g!\, (2g-2)!}\;\leq\; \frac{1}{\pi}\,, \quad \forall\, g\ge1\,,
\end{align}
which is true because the left-hand side is monotone increasing and asymptotic to~$\frac{1}{\pi}$ by Stirling's formula.
For $n=2$, according to~\cite{DGZZ20} we have for $g\ge1$, $d_1,d_2\ge0$ satisfying $d_1+d_2=3g-1$,
\begin{align}
C(d_1,d_2) \;\leq\; C(0,3g-1) \,.
\end{align}
Applying~\eqref{string} and~\eqref{onepoint}, we have
\begin{align}
C(d_1,d_2)\;\leq\;\frac{6g-1}{6g-3}\,\frac{3\, (6g-3)!!}{54^g\, g!\, (2g-2)!} \;\leq\; \frac{1}{\pi}\,,
\end{align}  where the second inequality holds for the similar reason as~\eqref{onepointbound}. 
	
Now consider the case that $n\ge3$ and $\X\ge8$. Let us prove inequality~\eqref{ineqthetaXn} by induction on~$\X$. For $\X=8$, it is easy to check the validity of inequality~\eqref{ineqthetaXn}.
Consider $\X\ge9$ and $\dd=(d_1,\dots,d_{n})\in \bigl(\mathbb{Z}_{\ge1}\bigr)^{n}$ satisfying $X(\dd)=\X$. Without loss of generality, we assume $d_1=\min \{d_j\}$. For $d_1\ge3$, 
by using \eqref{CVirasoro}, Lemma~\ref{estimatethird} and \eqref{C0dest} we can obtain
\begin{align*}
C(\dd)\;
&\leq\; \Bigl(1-\frac{2d_1-2}{3\X-3}\Bigr) \, f(\X-1,n-1)\+\frac{2d_1-6}{3\X-3} \, f(\X-1,n+1)\nn\\
&\qquad \+\frac{4}{3\X-3}\,\frac{3\X-4}{3\X-6}\, f(\X-2,n) \+\frac{2}{(\X-1)(\X-2)}\,.
\end{align*}
So 
\begin{align}
C(\dd)&\;\leq\; \Bigl(1-\frac{2d_1-2}{3\X-3}\Bigr) \, f(\X-1,n-1)\+\frac{2d_1-2}{3\X-3} \, f(\X-1,n+1)\+\frac{26}{9(\X-1)(\X-2)}\nn\\
&\leq\; \frac{2}{3} \, f(\X-1,n-1)\+\frac{1}{3} \, f(\X-1,n+1)\+\frac{26}{9(\X-1)(\X-2)}\,, \label{Cdestd1ge3}
\end{align}
where for the first inequality we used the facts that $f(\X-2,n)\leq f(\X-1,n+1)$ and $f(\X-1,n+1)\leq 1$, and for the second inequality we used the facts that $d_1\leq \frac{3\X-3}{2n}$, $n\ge3$ and $f(X-1,n+1)\ge f(\X-1,n-1)$.
For $d_1=2$, by using \eqref{CVirasoro}, Lemma~\ref{estimatethird} and \eqref{C00dest} we obtain
\begin{align}
C(\dd)\;
&\leq\; \Bigl(1-\frac{2}{3\X-3}\Bigr) \, f(\X-1,n-1)\+\frac{2}{(\X-1)(\X-2)}\nn\\ 
&\qquad \+\frac{2}{3X-3}\,\frac{3\X-5}{3\X-6}\,\,\frac{3\X-10}{3\X-12} f(\X-3,n-1)\,.\nn
\end{align}
So
\begin{align}
C(\dd)\;
\leq\; \frac{1}{3}\,f(\X-1,n-1)\+\frac{2}{3} \, f(\X-1,n+1)\+\frac{79}{27(\X-1)(\X-2)}\,,\label{Cdestd1=2}
\end{align}
where we used the facts that $f(\X-3,n-1)\leq f(\X-1,n+1)$, $f(\X-1,n+1)\leq 1$ and $f(X-1,n+1)\ge f(\X-1,n-1)$. For $d_1=1$, by using~\eqref{dilaton} we have
\begin{align}\label{Cdestd1=1}
C(\dd) \;\leq\; f(\X-1,n-1)\;\leq\; \frac{1}{3}\,f(\X-1,n-1)\+\frac{2}{3} \, f(\X-1,n+1)\,.
\end{align}
From~\eqref{Cdestd1ge3}, \eqref{Cdestd1=2}, \eqref{Cdestd1=1}, we get the validity of~\eqref{ineqthetaXn}.
The lemma is proved. 
\end{proof}

\begin{lemma}\label{lemmathetaineq}
For $\X\ge4,n\ge2$, we have
\begin{align}\label{thetaineq}
\theta_{\X-3,n-1} \;\leq\; \theta_{\X,n}\+\frac{M}{\X-1}
\end{align}
for some constant $M>0$.
\end{lemma}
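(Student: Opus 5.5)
The plan is to build, from a maximiser of $\theta_{\X-3,n-1}$, a vector of length $n$ and weight $X=\X$ whose $C$-value is at least as large up to $O(1/\X)$. Take $\dd'=(d_1,\dots,d_{n-1})\in(\ZZ_{\ge1})^{n-1}$ with $X(\dd')=\X-3$ and $C(\dd')=\theta_{\X-3,n-1}$, and order it so that $d_1=\min_j d_j$.

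Raising one entry by $3$ increases $X$ by $2$, and appending a $1$ increases $X$ by $1$. So $\dd^\sharp:=(d_1+3,d_2,\dots,d_{n-1},1)$ lies in $(\ZZ_{\ge1})^n$ with $X(\dd^\sharp)=\X$. By \eqref{dilaton},
\[
\theta_{\X,n}\ \ge\ C(\dd^\sharp)\=C(d_1+3,d_2,\dots,d_{n-1}).
\]
It therefore suffices to prove
\begin{equation*}
C(d_1+3,d_2,\dots,d_{n-1})\;\ge\;C(d_1,d_2,\dots,d_{n-1})\m\frac{M}{\X-1},
\end{equation*}
that is, raising the smallest entry by $3$ costs at most $O(1/\X)$. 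I will prove this by induction on $\X$ and close with Lemma~\ref{lemineqthetaXn}.

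\textbf{Termwise comparison via \eqref{CVirasoro}.} Expand both sides with respect to the first entry. The linear terms of the first kind have coefficients $\frac{2d_j+1}{3(X-1)}$ (with $X=\X$ on the left and $X=\X-3$ on the right). They multiply $C(d_2,\dots,d_j+d_1+2,\dots)$ on the left and $C(d_2,\dots,d_j+d_1-1,\dots)$ on the right. Each left argument is the right argument with one entry raised by $3$. The induction hypothesis (applied at a smaller $X$, where the raised entry need no longer be minimal) bounds the difference of these terms by $O(1/\X)$. The discrepancy in the coefficients is $O(1/\X)$ times a bounded quantity, since $C\le f\le 1$ by Lemmas~\ref{lemmagXn} and~\ref{lemineqthetaXn}. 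The sum $\sum_{a+b=d_1+1}$ on the left has three more terms than $\sum_{a+b=d_1-2}$ on the right. I will pair each right-hand term $(a,b)$ with $(a+3,b)$ or $(a,b+3)$, discard the surplus terms since they are nonnegative by Lemma~\ref{lemmalowerbound0}, and compare paired terms by induction again. For the quadratic terms on the right, Lemma~\ref{estimatethird} bounds them by $O(1/\X^2)$, so they may simply be dropped there. Entries $0$ produced by the splitting are handled with \eqref{C0dest} and \eqref{C00dest}, as in the proof of Lemma~\ref{lemineqthetaXn}.

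\textbf{Main obstacle.} The naive induction accumulates the error: each DVV step contributes $O(1/\X^2)$ from the coefficient mismatch plus the inherited error $M/(\X-2)$. Since the coefficients sum to $1-O(1/\X)$ rather than $1$, one needs $\frac{M}{\X-2}\bigl(1-\frac{c}{\X}\bigr)+\frac{K}{\X^2}\le\frac{M}{\X-1}$ for a suitable $M$. I would first try to show that the total weight on the right-hand side is at most $1-\frac{2}{3(\X-1)}+O(1/\X^2)$; this weight is fixed by $\sum_j(2d_j+1)$ and $d_1$. That deficit absorbs the difference between $M/(\X-2)$ and $M/(\X-1)$ once $M$ is large.

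\textbf{Fallback.} If this bookkeeping fails in the regime where $n$ is comparable to $\X$, I would split into two regimes. For $n\le\X/5$, Lemmas~\ref{lemmagXn} and~\ref{lemineqthetaXn} give $\theta_{\X-3,n-1}\le f(\X-3,n-1)\le 1/\pi+O(1/\X)$. Lemma~\ref{lemmalowerbound} with \eqref{asymsmallest} gives $\theta_{\X,n}\ge C(3g-2)\ge 1/\pi-O(1/\X)$. Together these yield the claim directly. The induction above is then needed only for $n>\X/5$, where the entries are bounded on average and the coefficient estimates are cleaner.
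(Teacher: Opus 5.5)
\textbf{There is a genuine gap: the DVV induction cannot close.} Your reduction is fine: $\theta_{\X,n}\ge C(d_1+3,d_2,\dots,d_{n-1},1)=C(d_1+3,d_2,\dots,d_{n-1})$ plays the same role as the paper's $\theta_{\X,n}\ge C(4,\dd)$. The trouble is the weight bookkeeping.

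First, the linear coefficients of \eqref{CVirasoro}, expanded in the first entry, sum to exactly $1$ whenever $d_1\ge1$, because $\sum_{j\ge2}(2d_j+1)+2(d_1-1)=3X(\dd)-3$. So the deficit $1-\frac{2}{3(\X-1)}$ you hope for on the right-hand side does not exist. The deficit is on the other side. The vector $(d_1+3,d_2,\dots,d_{n-1})$ has $X=\X-1$ (not $\X$), and each of its coefficients is $\frac{\X-4}{\X-2}$ times the matching one for $\dd'$. After you discard the three surplus split terms, the matched weights on the left sum to $1-\frac{2}{\X-2}$.

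Hence each step loses about $\frac{2}{\X}\,C(\dd')\approx\frac{2}{\pi\X}$. That is $O(1/\X)$, not $O(1/\X^2)$; your own remark that the coefficient discrepancy is ``$O(1/\X)$ times a bounded quantity'' says as much. The resulting recursion $E_\X\le(1-\frac{2}{\X})E_{\X-1}+\frac{c}{\X}$ only gives $E_\X=O(1)$.

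The loss could only be recovered by keeping the surplus terms and knowing that all $C$-values in the two expansions agree to within $O(1/\X)$. For $n>\X/5$ that two-sided uniform estimate is Theorem~\ref{thm1} itself, and the paper's proof of Theorem~\ref{thm1} in exactly this range uses the present lemma. So the argument is circular where it matters. Your fallback is correct but covers only $n\le\X/5$, where the statement is easy. Moreover, the induction would have to run over vectors with entries $0$ created by the splitting. For those, \eqref{C0dest} and \eqref{C00dest} give only upper bounds, not the comparison you need.

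\textbf{The missing idea} is to compare numbers with the \emph{same} $X$ and with leading coefficient exactly $1$. The paper gets this from the KdV identities \eqref{Omega11} and \eqref{Omega14}, which express $\epsilon^2\partial_{t_1}^2\log\tau$ and $\epsilon^2\partial_{t_1}\partial_{t_4}\log\tau$ as polynomials in $u,u_1,u_2,\dots$ with positive coefficients.
\begin{itemize}
\item After normalization, \eqref{Omega11} becomes \eqref{Cdrecursion1}: $C(\dd)=C(1,1,\dd)=C(0^6,\dd)+{}$(nonnegative product terms of total size $O(1/X)$, by a count like Lemma~\ref{estimatethird}).
\item \eqref{Omega14} gives \eqref{C4d}: $C(4,\dd)=C(1,4,\dd)\ge C(0^{12},\dd)$.
\end{itemize}
Chaining these, $C(4,\dd)\ge C(0^{12},\dd)\ge C(0^6,\dd)-O(1/X)\ge C(\dd)-O(1/X)$. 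Taking $\dd$ to be a maximizer for $\theta_{\X-3,n-1}$ gives the lemma.

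In each identity both sides have equal $X$ (e.g.\ $X(1,1,\dd)=X(0^6,\dd)=X(\dd)+2$), and the linear term has coefficient exactly $1$. So there is no weight mismatch and nothing accumulates; the whole error comes from the genuinely nonlinear terms.
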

\begin{proof}
To prove~\eqref{thetaineq}, it suffices to prove 
\begin{align}\label{C4ineq}
C(4,d_1,\dots,d_n) \;\ge\; C(d_1,\dots,d_n)\m\frac{M}{X(4,d_1,\dots,d_n)-1}
\end{align}
for some constant $M>0$.
According to e.g.~\cite{BDY,Dickey, DYZ},  for a KdV tau-function $\tau=\tau(\mathbf{t})$ the following identities hold:
\begin{align}
& \e^2 \frac{\partial^2 \log \tau}{\partial t_{1}\partial t_{1}}
 \=  \frac{u_{4}}{144}  \+\frac{uu_{2}}{6} \+ \frac{u_1^2}{24} \+ \frac{u^3}{3} \,, \label{Omega11}\\
& \e^2 \frac{\partial^2 \log \tau}{\partial t_{1} \partial t_{4}}
 \= \frac{u_{10}}{2903040}+\frac{u\,u_{8}}{60480}+\frac{13 u^2\, u_{6}}{40320}
+\frac{7 u^3\, u_{4}}{2160}+\frac{17 u_{4}^2}{96768}+\frac{u\,u_{3}^2}{448} 
+\frac{5u^4 u_{2}}{288}\nn\\
& +\frac{41 u^2 u_{2}^2}{2880}+\frac{79 u_{2}^3}{40320}+\frac{u^3 u_1^2}{36} +\frac{u_1^4}{384} +\frac{u_{7} u_1}{17280}+\frac{41 u_{2}\,u_{6}}{241920}+\frac{23 u_{3} u_{5}}{80640}+\frac{17 u\,u_1\,u_{5}}{10080}\nn\\
&+\frac{13 u\,u_{2}\,u_{4}}{3360}+\frac{251 u_1^2\,u_{4} }{120960}+\frac{1}{60} u^2\,u_1\,u_{3}+\frac{1}{36} u\,u_1^2\,u_{2}+\frac{151 u_1\,u_{2}\,u_{3}}{20160}+\frac{u^6}{144}\,,
\label{Omega14}
\end{align}
where $u=\partial_x^2(\log\tau)$, $u_k=\p_{t_0}^k (u)$. 
Similar to~\cite{LX10}, using~\eqref{Omega11} one can obtain the following recursion for $C(\dd)$:
\begin{align}\label{Cdrecursion1}
&C(\dd) \= C(0^6,\dd) \+ \frac{1}{6} \, \sum_{I\sqcup J=\{1,\dots,n\}} 
\frac{(X(0^3,d_I)-1)!(X(0^3,d_J)-1)!}{(X(\dd)+1)!}\, C(0^3,d_I)\, C(0^3,d_J)\nn\\
&\+ \frac{2}{3} \, \sum_{I\sqcup J=\{1,\dots,n\}} 
\frac{(X(0^2,d_I)-1)!(X(0^4,d_J)-1)!}{(X(\dd)+1)!}\, C(0^2,d_I)\, C(0^4,d_J)\nn\\
&\+ \frac{1}{3} \, \sum_{I\sqcup J\sqcup K=\{1,\dots,n\}} 
\frac{(X(0^2,d_I)-1)!(X(0^2,d_J)-1)!(X(0^2,d_K)-1)!}{(X(\dd)+1)!}\, C(0^2,d_I)\, C(0^2,d_J)\, C(0^2,d_K)\,,
\end{align}
which can imply
\begin{align}\label{C0^6d}
C(0^6,\dd) \;\ge\; C(\dd) \m \frac{M_1}{X(\dd)+1}
\end{align}
for some constant $M_1>0$. Similarly, the identity~\eqref{Omega14} leads to the inequality 
\begin{align}\label{C4d}
C(4,\dd) \;\ge\; C(0^{12},\dd)\,.
\end{align} 
By applying~\eqref{C0^6d} and \eqref{C4d}, we obtain~\eqref{C4ineq}. The lemma is proved.
\end{proof}

We are ready to prove Theorem~\ref{thm1}.
\begin{proof}[Proof of Theorem~\ref{thm1}]
For $n\ge1$, $\dd\in(\mathbb{Z}_{\ge1})^n$ satisfying $g(\dd)\in\mathbb{Z}$, by Lemma~\ref{lemmalowerbound} and~\eqref{defthetagn} we have 
\begin{align}\label{lowerupperbound}
C(3g(\dd)-2)\;\leq\; C(\dd) \;\leq\; \theta_{X(\dd), \thin n}\,.
\end{align}
On one hand, we know from~\eqref{asymsmallest} that $C(3g(\dd)-2)=\frac{1}{\pi}+O(\frac{1}{g(\dd)})$ with an absolute $O$-constant. On the other hand, according to Lemma~\ref{lemineqthetaXn},  $\theta_{X(\dd),n}\leq f(X(\dd),n)$, then by Lemma~\ref{lemmagXn} we know that when $n\leq \frac{\X}{5}$, $\theta_{\X,n}=\frac{1}{\pi}+O(\frac{1}{\X})$ with an absolute $O$-constant. 

Consider $\dd\in(\mathbb{Z}_{\ge1})^n$ such that $\frac{X(\dd)}5<n< \frac{3X(\dd)}{5}$. Without loss of generality, assume $d_1=\min\{d_j\}$. We then conclude that $d_1\in\{1,\dots,6\}$. 
Using the recursion~\eqref{CVirasoro} and using an estimate similar to~\eqref{Cdestd1ge3}, \eqref{Cdestd1=2}, we have
\begin{align}
&C(\dd) \;\leq\; \Bigl(1-\frac{10}{3X(\dd)-3}\Bigr)\, \theta_{X(\dd)-1, \thin n-1} \+ \frac{10}{3X(\dd)-3}\,
\max\{\theta_{X(\dd)-1, \thin n-1},\theta_{X(\dd)-3,\thin n-1}\}\nn\\
&\qquad\qquad \+ \frac{M_2}{(X(\dd)-1)(X(\dd)-2)}
\end{align}
for some $M_2>0$. 
By applying Lemma~\ref{lemmathetaineq}, we obtain
\begin{align}\label{thetaineq2}
\theta_{\X, \thin n}\;\leq\; \theta_{\X-1, \thin n-1} \+ \frac{M_3}{(\X-1)(\X-2)}
\end{align}
for some constant $M_3>0$.
Iterating~\eqref{thetaineq2} $t=[\frac{5n-\X+1}4]$ times and 
using Lemma~\ref{lemmagXn} and Lemma~\ref{lemmathetaineq} we obtain that $\theta_{\X,n}$ is bounded by $\frac{1}{\pi}+O(\frac{1}{\X})$ uniformly when $\frac{\X}{5}< n\leq \frac{3\X}{5}$. This together with~\eqref{lowerupperbound} implies that 
formula~\eqref{STRONG} holds for $\frac{X(\dd)}{5}\leq  n\leq \frac{3X(\dd)}{5}$. 
For $n>\frac{3X(\dd)}{5}$, we have $\theta_{X(\dd), \thin n}=\theta_{X(\dd)-1,\thin n-1}$ 
(indeed, since $C(\dd)$ is unchanged by removing any~1 argument, 
we have by definition $\theta_{\X, \thin n}=\theta_{\X-1, \thin n-1}$ for $n > \frac{3\X}5$), which implies that formula~\eqref{STRONG} still holds. Combining all three cases, we obtain the statement of Theorem~\ref{thm1}.
\end{proof}

Before proceeding, let us give an application 
of Theorem~\ref{thm1}.
Consider the {\it Painlev\'e I equation}:
\begin{align}\label{P1eq}
\frac{d^2 U}{dX^2} + \frac{1}{16}U^2-\frac{1}{16}X \=0\,.
\end{align}
This equation has (cf.~e.g.~\cite{JoshiK, Kapaev}) a unique formal solution $U(X)$
of the form
\begin{align}\label{solP1eq}
U(X) \= \sum_{g=0}^{\infty}c_g\, X^{\frac{1-5g}2}\,,  \qquad c_0=-1\,, \; c_g\in \mathbb{C}\,,
\end{align} 
where $c_g$, $g\ge0$, are determined by the recursion
\begin{align}\label{reccg}
c_g \= 50\,(g-1)^2\,c_{g-1} \+ \frac12\, \sum_{h=2}^{g-2}
c_h\, c_{g-h}\,, \quad g\ge3\,,
\end{align}
with the initial data $c_0=-1$, $c_1=2$, $c_2=98$. Here we use normalizations as in~\cite{DYZ0}.
From~\eqref{reccg} one can deduce the following asymptotics for $c_g$:
\begin{align}\label{asymcg}
c_g \;\sim\; A\cdot 50^g \, (g-1)!^2\, \Bigl(1-\frac{49}{3750 g^3}-
\frac{49}{1250 g^4}+\cdots\Bigr)\,, \quad g\to \infty\,,
\end{align}
where $A$ is some constant whose determination
requires deep analysis, which was obtained in~\cite{JoshiK, Kapaev} and we will give a new proof.

\smallskip

\noindent {\bf Theorem A (\cite{JoshiK, Kapaev}).}
 {\it  There holds that }
$$A\=\frac{1}{2\pi^2} \, \sqrt{\frac35}\,.$$

\begin{proof}  
According to~\cite{IZ} (cf.~also~e.g.~\cite{DYZ0}), we know that $c_g$ can be expressed in terms of Witten's intersection  
	numbers as follows:
	\begin{align}\label{cgrelation}
		c_g \= \frac{2^g\, 3^{3g-2}}{5^{3g-3}} \, \frac{(5g-5)!\, (5g-3)}{(3g-3)!}\, C(2^{3g-3})\,,
		\quad g\ge2\,.
	\end{align}
	The theorem is then proved by using Theorem~\ref{thm1} and Stirling's formula.
\end{proof}

Note that Theorem~A cannot be deduced from Aggarwal's result~\eqref{asyAgg}, as $n=3g-3$ which is certainly beyond $o(\sqrt{g})$.

We now proceed and prove Theorem~\ref{thm2}.

\begin{proof}[Proof of Theorem~\ref{thm2}]
For $Y\ge1$ and for $\alpha\ge0$, define
\begin{align}
\theta_{Y}^{(\alpha)} \: \sup_{n\ge1}\max_{\substack{\dd\in(\ZZ_{\ge1})^n \\ X(0^{\alpha},\dd)-p_1(\dd)=Y}}C(0^\alpha,\dd)\,.\quad 
\end{align}
From Theorem~\ref{thm1} and Lemma~\ref{lemmaC0C00} we know that as $Y\to\infty$, $\theta^{(\alpha)}_Y \= 1/\pi + O(Y^{-1})$, $\alpha=0,1,2$, with an absolute $O$-constant. Consider $\alpha\ge3$. For $n,Y\ge1$, and for any  $\dd\in(\mathbb{Z}_{\ge1})^n$ satisfying $X(0^\alpha,\dd)=p_1(\dd)+Y$, write $\dd=(1^{p_1(\dd)},d_{p_1(\dd)+1},\dots,d_n)$. By using~\eqref{string} we have
\begin{align} C(0^\alpha,\dd)&\=C(0^\alpha,d_{p_1(\dd)+1},\dots,d_n)\nn\\
&\=\sum_{j=p_1(\dd)+1}^{n} \frac{2d_j+1}{3(Y-1)} C(0^{\alpha-1},d_{p_1(\dd)+1},\dots,d_j-1,\dots,d_n)\nn\\
&\;\leq\; \Bigl(1+\frac{3-\alpha}{3(Y-1)}\Bigr) \,  \max\{\theta^{(\alpha-1)}_{Y-1},\theta^{(\alpha-1)}_{Y-2}\}\,. \label{C0alphad}
\end{align}
By taking maximum we get 
\begin{align}
\theta^{(\alpha)}_{Y}\;&\leq\; \Bigl(1+\frac{3-\alpha}{3(Y-1)}\Bigr) \, 
\max\{\theta^{(\alpha-1)}_{Y-1},\theta^{(\alpha-1)}_{Y-2}\} \nn\\
\;&\leq\; \cdots \;\leq\; \, \max_{j=0,\dots,\alpha-2}\{\theta^{(2)}_{Y+2-\alpha-j}\}\,
\prod_{j=1}^{\alpha-2} \Bigl(1+\frac{2+j-\alpha}{3(Y-j)}\Bigr)\,. \label{thetagalphaest}
\end{align}
Given sufficiently small $\epsilon>0$, when $\alpha\leq\epsilon Y$ we know from $\theta^{(2)}_Y \= 1/\pi + O(Y^{-1})$ that the maximum in the right-hand side of~\eqref{thetagalphaest} equals $1/\pi+O(Y^{-1})$ with an absolute $O$-constant, and when $\alpha>\epsilon Y$, the maximum is still bounded by some constant but the product in the right-hand side of~\eqref{thetagalphaest} has the estimate
\begin{align}
\prod_{j=1}^{\alpha-2} \Bigl(1+\frac{2+j-\alpha}{3(Y-j)}\Bigr)\;\leq\;
\exp\biggl(\sum_{j=1}^{\alpha-2}\frac{2+j-\alpha}{3(Y-j)}\biggr) \;\leq\; \exp\biggl(-\frac{(\alpha-2)(\alpha-3)}{6Y}\biggr)\,,
\end{align}
and hence is $O(Y^{-1})$ with an absolute $O$-constant. Therefore, we conclude that
\begin{align}\label{thetaYalphaasy}
\theta_{Y}^{(\alpha)} \= \frac{1}{\pi}\prod_{j=1}^{\alpha-2} \Bigl(1+\frac{2+j-\alpha}{3(Y-j)}\Bigr) +O(Y^{-1})\,,
\end{align}
with an absolute $O$-constant (independent of $\alpha$). Let us now consider the lower bound. For $Y\ge1$ and for $\alpha\ge0$, define
\begin{align}
\phi_{Y}^{(\alpha)} \: \inf_{n\ge1}\min_{\substack{\dd\in(\ZZ_{\ge1})^n \\ X(0^{\alpha},\dd)-p_1(\dd)=Y}}C(0^\alpha,\dd)\,.\quad 
\end{align}
From Theorem~\ref{thm1}, inequalities~\eqref{C0dlowerbound} and~\eqref{C00dlowerbound} we find that as $Y\to\infty$, $\theta^{(\alpha)}_Y \= 1/\pi + O(Y^{-1})$, $\alpha=0,1,2$, with an absolute $O$-constant. By using a similar estimate of~\eqref{C0alphad} and by taking minimum we get
\begin{align}
\phi^{(\alpha)}_{Y}\;&\geq\; \Bigl(1+\frac{3-\alpha}{3(Y-1)}\Bigr) \, 
\min\{\theta^{(\phi-1)}_{Y-1},\phi^{(\alpha-1)}_{Y-2}\} \nn\\
\;&\geq\; \cdots \;\geq\; \, \min_{j=0,\dots,\alpha-2}\{\phi^{(2)}_{Y+2-\alpha-j}\}\,
\prod_{j=1}^{\alpha-2} \Bigl(1+\frac{2+j-\alpha}{3(Y+1-2j)}\Bigr)\,. \label{phigalphaest}
\end{align}
Using similar arguments that proves~\eqref{thetaYalphaasy} we can obtain that
\begin{align}\label{phiYalphaasy}
\phi_{Y}^{(\alpha)} \= \frac{1}{\pi}\prod_{j=1}^{\alpha-2} \Bigl(1+\frac{2+j-\alpha}{3(Y+1-2j)}\Bigr) +O(Y^{-1})\,,
\end{align}
with an absolute $O$-constant (independent of $\alpha$). 
Noticing that the upper bound~\eqref{thetaYalphaasy} and the lower bound~\eqref{phiYalphaasy} differ by $O(Y^{-1})$ with an absolute $O$-constant independent of $\alpha$ and that
$X(\dd)-p_1(\dd)\ge 2g(\dd)-2$, we finish the proof of Theorem~\ref{thm2}.
	\end{proof}

\begin{proof}[Proof of Corollary~\ref{cor1}]
For $k=k(g)=O(\sqrt{g})$, we know from Theorem~\ref{thm2} and Stirling's formula that
\begin{align}\label{C0sqrtg}
C\bigl(0^{k},2^{3g-3+k}\bigr) &\= \dfrac1\pi \, \frac{\bigl(\frac{2}{3}\bigr)^{k}\,\bigl(\frac{15g+3k-13}2\bigr)_{k}}{(5g-5+k)_{k}} \+ O\Bigl(\frac{1}{g}\Bigr)\nn\\
&\= \dfrac1\pi \, \frac{\Bigl(\frac{15g+5k-13}{3}\Bigr)^k \, e^{-\frac{k^2}{15g+3k-13}}}{(5g-5+2k)^k\, e^{-\frac{k^2}{2(5g-5+k)}}}\, (1+o(1))\+ O\Bigl(\frac{1}{g}\Bigr)\nn\\
&\=\dfrac1\pi \, e^{-\frac{k^2}{30g}}\, (1+o(1))\+ O\Bigl(\frac{1}{g}\Bigr)\,.
\end{align}
For $k/\sqrt{g}\to\infty$, we have
\beq
\prod_{j=1}^{k}\Bigl(1+\frac{2+j-k}{3(5g-5+2k-j)}\Bigr)\;\leq\; 	\exp \Bigl(\frac{k(5-k)}{6(5g-5+k)}\Bigr) \,\to\, 0\,, \quad g\to\infty\,.
\eeq  
By using Theorem~\ref{thm2} we know that $C(0^k,2^{3g-3+k})\to0$ 
as $g\to\infty$.
This proves Corollary~\ref{cor1}.
\end{proof}

\section{Polynomiality in large genus}
\label{secproofofthm2}
In this section, we prove Theorem~\ref{thmpoly} by using the recursion~\eqref{CVirasoro}. 
\begin{proof}[Proof of Theorem~\ref{thmpoly}]
Fix $\dd'=(d_1,\dots,d_{n-1})\in(\mathbb{Z}_{\ge0})^{n-1}$. 
Write $\dd=(\dd',d_n)$ with $d_n\ge0$.
According to~\cite{LX} (cf.~\cite{GY}) and Stirling's formula we know that $C(\dd)$ has the asymptotic expansion
\begin{align}\label{Cdexpansion2}
C(\dd) \= \frac1\pi \,\sum_{k} \frac{C_k(\dd')}{X(\dd)^k}\,, \qquad \text{as } d_n \to \infty\,,
\end{align}
where $C_k$ are functions of~$\dd'$.
By using the recursion~\eqref{CVirasoro} and by performing Laurent expansions, we obtain 
\begin{align}
&C_k(d,\dd')-C_k(\dd') \= -\sum_{l=1}^{k-1}(-1)^{k-l}\binom{k-1}{l-1} \, C_{l}(d,\dd') \nn\\
& + \frac{1}3\, \sum_{j=1}^{n-1}(2d_j+1)\, \bigl(C_{k-1}(d_1,\dots,d_j+d-1,\dots,d_{n-1})-C_{k-1}(\dd')\bigr) \nn\\
& + \sum_{\substack{a,b\\ a+b=d-2}}
\Biggl[\frac{2}3 \, \bigl(C_{k-1}(a,b,\dd')-C_{k-1}(\dd')\bigr) + \frac{1}3\,\sum_{I\sqcup J=\{1,\dots,n-1\}} \sum_{l=0}^{k-2} \, \mathfrak{a}_{(a,\dd'_I),l,k} \, 
C_l(b,\dd'_{J})  \Biggr]\,,  \label{Chatkrecursion}
\end{align}
where $d\ge0$ and $\mathfrak{a}_{\mathbf{w},l,k}$ are certain numbers defined via the following generating function
\begin{align}
(X(\mathbf{w})-1)! \, C(\mathbf{w})\, \frac{(X-X(\mathbf{w})-1)!}{X!\, (X-X(\mathbf{w}))^l} \;=:\;
\sum_k \frac{\mathfrak{a}_{\mathbf{w},l,k}}{X^k}\,.
\end{align}
Write
\begin{align}\label{Ck=ck}
C_k(\dd') \;=:\; \tilde{c}_k(p_0(\dd'),p_1(\dd'),\dots)\,,
\end{align}
where $p_r(\dd')$ denotes the multiplicity of~$r$ in~$\dd'$.
This defines functions $\tilde c_k(\mathbf{p})$, $k\ge0$, where ${\bf p}=(p_0,p_1,p_2,\dots)$ is a infinite integer sequence satisfying $\sum_i p_i<\infty$.
Then formula~\eqref{Chatkrecursion} becomes
\begin{align}
& \tilde c_k(\mathbf{p}+\mathbf{e}_{d}) - \tilde c_k(\mathbf{p})\= -\sum_{l=1}^{k-1}(-1)^{k-l}\binom{k-1}{l-1} \, \tilde c_{l}(\mathbf{p}+\mathbf{e}_{d}) \nn\\
& +\frac{1}{3}\sum_{i\ge0}(2i+1)\,p_i \, \bigl(\tilde c_{k-1}(\mathbf{p}-\mathbf{e}_i+\mathbf{e}_{i+d-1})-\tilde c_{k-1}(\mathbf{p})\bigr) \nn\\
&  +\sum_{\substack{a,b\ge0\\ a+b=d-2}}
\Biggl[\frac{2}3 \,  \bigl(\tilde c_{k-1}(\mathbf{p}+\mathbf{e}_a+\mathbf{e}_b)-\tilde c_{k-1}(\mathbf{p})\bigr)\nn\\
&\quad + \frac{1}3\,\sum_{\substack{\mathcal{E}(\mathbf{t}+\mathbf{e}_a)\leq k-1 \\ 0\leq t_r\leq p_r, \,r\ge0}} \sum_{l=0}^{k-2} \, \biggl(
\tilde c_l(\mathbf{p}-\mathbf{t}+\mathbf{e}_b) \, \alpha_{\mathbf{t}+\mathbf{e}_a,l,k} \,\prod_{i\ge0}\binom{p_i}{t_i}\biggr)\Biggr]\,, \quad d\ge0\,,\label{hatCkrecursion}
\end{align}
where $\mathcal{E}(\mathbf{t}):=X(0^{t_0},1^{t_1},\dots)$, $\alpha_{\mathbf{t},l,k}=\mathfrak{a}_{(0^{t_0}1^{t_1}2^{t_2}\cdots),l,k}$, and $\mathbf{e}_d$ denotes $(0,\dots,0,1,0,0,\dots)$ with ``1" appearing in the $(d+1)$th place. 
	
Let us now prove by induction that $\tilde c_k(\mathbf{p})$, $k\ge0$, belong to $\mathbb{Q}[p_0,p_1,\dots]$ and satisfy the degree estimates
\begin{equation}\label{degreeest1}
\deg \, \tilde c_k({\bf p}) \;\leq\; 3k-1\,,\quad k\ge1\,,
\end{equation} 
under the degree assignments $\deg p_d=2d+1$, $d\ge0$. For $k=0$, by using Theorem~\ref{thm1} 
we know that $\tilde c_0(\mathbf{p})\equiv1$. Assume that for $1\leq l\leq k-1$, $\tilde c_l({\bf p})\in\mathbb{Q}[p_0,p_1,\dots]$ are polynomials satisfying 
$\deg \, \tilde c_l({\bf p}) \leq 3l-1$. 
Then for $k$ and for every $d\ge0$, the RHS of equation~\eqref{hatCkrecursion} are polynomials in~$p_0,\dots,p_{[(3k-5)/2]}$.
Moreover, by the inductive assumption these polynomials are independent of~$d$ for every $d\ge 3k-1$, i.e.,
\begin{align}\label{diffck}
\tilde c_k(\mathbf{p}+\mathbf{e}_d) - \tilde c_k(\mathbf{p}) \= \left\{\begin{array}{ll}
f_{d}(p_0,\dots,p_{[(3k-5)/2]})\,, \quad &d\le 3k-2\,,\\
g(p_0,\dots,p_{[(3k-5)/2]})\,, \quad &d\ge 3k-1
\end{array}\right.
\end{align}
for some $f_d$ ($d\leq k$) and $g$ in $\mathbb{Q}[p_0,\dots,p_{[(3k-5)/2]}]$. The compatibility of~\eqref{diffck} implies that $g(p_0,\dots,p_{[(k-3)/2]})\equiv A$ is a constant.  Solving~\eqref{diffck} we obtain that
$\tilde c_k$ have the form
\begin{align}\label{ckform1}
\tilde c_k(\mathbf{p}) \= h(p_0,\dots,p_{3k-2}) \+ A \, n'(\mathbf{p})\,,
\end{align} 
where $h\in \mathbb{Q}[p_0,\dots,p_{3k-2}]$ and $n'(\mathbf{p}):=\sum_{i\ge0}p_i$.
We aim to show that $A=0$. Consider equation~\eqref{hatCkrecursion} with $k$ replaced by $k+1$. Using a similar analysis and using~\eqref{ckform1}, we obtain that for every $d\ge 6\,k-1$,
\begin{align}\label{ck+1diff}
\tilde c_{k+1}(\mathbf{p}+\mathbf{e}_d) - \tilde c_{k+1}(\mathbf{p}) \= \frac{4}3\, A \, d \+ A'\, n'(\mathbf{p}) \+ s(p_0,\dots,p_{3k-2})\,,
\end{align} 
where $A'\in \mathbb{Q}$ is a constant, and $s(p_0,\dots,p_{3k-2})$ is a polynomial in $\mathbb{Q}[p_0,\dots,p_{3k-2}]$ independent of~$d$.  Now taking $\mathbf{p}=\mathbf{0}$, we find that equation~\eqref{ck+1diff} contradicts with~\eqref{twopointZograf} unless $A=0$. 
We conclude from~\eqref{ckform1} that
\begin{align}
\tilde c_k(\mathbf{p})\in\mathbb{Q}[p_0,\dots,p_{3k-2}]\,.
\end{align}
Now taking $d\ge 3k-1$ in equation~\eqref{hatCkrecursion} gives
\begin{align}
&0\=-\sum_{l=1}^{k-1}(-1)^{k-l}\binom{k-1}{l-1} \, \tilde c_{l}(\mathbf{p})  
\+\frac{1}{3}\sum_{i\ge0}(2i+1)\,p_i \, \bigl(\tilde c_{k-1}(\mathbf{p}-\mathbf{e}_i)-\tilde c_{k-1}(\mathbf{p})\bigr) \nn\\
& +  \sum_{a=0}^{[(3k-5)/2]}\Biggl(
\frac{4}{3}  \, \bigl(\tilde c_{k-1}(\mathbf{p}+\mathbf{e}_a)-\tilde c_{k-1}(\mathbf{p})\bigr) 
+ \frac{1}3\,\sum_{\substack{\mathcal{E}(\mathbf{t}+\mathbf{e}_a)\leq k-1 \\ 0\leq t_r\leq p_r, \, r\ge0}} \sum_{l=0}^{k-2} \, \biggl(
\tilde c_l(\mathbf{p}-\mathbf{t}) \, \alpha_{\mathbf{t}+\mathbf{e}_a,l,k} \,\prod_{i\ge0}\binom{p_i}{t_i}\biggr)\Biggr)\,.\label{hatCkidentity}
\end{align}
Using~\eqref{hatCkidentity} and~\eqref{hatCkrecursion}, we obtain
\begin{align}
&\Delta_{p_d} \tilde c_{k}(\mathbf{p})\= \m\sum_{l=1}^{k-1}(-1)^{k-l}\binom{k-1}{l-1} \, \Delta_{p_d} \tilde c_{l}(\mathbf{p}) 
\+\frac{1}3\,\sum_{i\ge0}(2i+1)\,p_i \, \Delta_{p_{i}+d} \tilde c_{k-1}(\mathbf{p}-\mathbf{e}_i) \nn\\
& \+ \frac{2}3\, \sum_{a=0}^{d-2}
\Delta_{p_a}\Delta_{p_{d-2-a}} \tilde c_{k-1}(\mathbf{p})
\, - \, \frac{4}3\,\sum_{a=d-1}^{[(3k-5)/2]}  \Delta_{p_a} \tilde c_{k-1}(\mathbf{p}) \nn\\
&\+ \frac{1}3\, \sum_{a=0}^{d-2}\sum_{\substack{\mathcal{E}(\mathbf{t}+\mathbf{e}_a)\leq k-1 \\ 0\leq t_r\leq p_r\,(r\ge0)}} \sum_{l=0}^{k-2} \biggl(
\, \Delta_{p_{d-2-a}} \tilde c_l(\mathbf{p}-\mathbf{t}) \, \alpha_{\mathbf{t}+\mathbf{e}_a,l,k} \,\prod_{i\ge0}\binom{p_i}{t_i}\biggr) \nn\\
& +\, \frac{1}3\,\sum_{a=d-1}^{[(3k-5)/2]} 
\sum_{\substack{\mathcal{E}(\mathbf{t}+\mathbf{e}_a)\leq k-1 \\ 0\leq t_r\leq p_r\,(r\ge0)}} 
\sum_{l=0}^{k-2} \, \biggl(\tilde c_l(\mathbf{p}-\mathbf{t}) \, \alpha_{\mathbf{t}+\mathbf{e}_a,l,k} \,\prod_{i\ge0}\binom{p_i}{t_i}\biggr)\,, \quad d\ge0 \,. \label{hatCk2}
\end{align}
We find by inductive assumption that each term of the right-hand side of~\eqref{hatCk2} is of degree less than or equal to $3k-2-2d$ for every $d\ge0$, which implies 
$\deg \tilde c_k\leq 3k-1$. In particular, $\tilde c_k$ is a polynomial that only depends on $p_0,\dots,p_{[3k/2]-1}$. 
	
Now restrict to the case when $\dd'\in(\mathbb{Z}_{\ge2})^{n-1}$.
From~\eqref{Cdexpansion2}, \eqref{degreeest} and Stirling's formula we know that 
\begin{align}
\widehat{C}(\dd) \;\sim\; \sum_k \frac{\widehat{C}_k(\dd')}{X(\dd)^k}\,, 
\qquad d_n\to\infty\,,\label{Chatexpansion}
\end{align}
where $\widehat{C}_k$ are functions of~$\dd'$ with $\widehat{C}_0\equiv1$.
Define $\widehat{c}_k(p_2,p_3,\dots)\in \mathbb{Q}[p_2,p_3,\dots]$, $k\ge0$, via
\begin{align}
\gamma(X)\,\sum_k \frac{\widehat{c}_k(p_2,p_3,\dots)}{X^k} \= \sum_k \frac{\tilde c_k(0,0,p_2,p_3,\dots)}{X^k}\,,
\end{align}
where the left-hand side is understood as a power series in~$X^{-1}$. 
It then follows from $\deg \tilde c_k\leq 3k-1$ that $\deg \widehat{c}_k\leq 3k-1$.
Using~\eqref{Chatexpansion},~\eqref{Cdexpansion2},~\eqref{Ck=ck}, we know that
\begin{align}
\widehat{C}_k(\dd') \=\widehat{c}_k(p_1(\dd'),p_2(\dd'),\dots)
\end{align}
for all~$\dd'$. The statement that $\widehat{c}_k(0,0,\dots)=0$ follows from the fact that $C(3g-2)=\gamma(6g-3)$ for every $g\ge1$. This finishes the proof of Theorem~\ref{thmpoly}.
\end{proof}

\end{document}